\newtheorem{thm}{Theorem}
\newtheorem{cor}{Corollary}
\newtheorem{mark_}{Remark}
\newdefinition{definition}{Definition}
\newtheorem{ass}{Assumption}
\def\bf{\bm}
\journal{Journal of Process Control}
\begin{document}

\begin{frontmatter}



\title{Analysis and tuning of a three-term DMC}

\author[label1]{Yun Zhu}
\author[label2]{Kangkang Zhang}
\author[label1]{Yucai Zhu}
\author[label1]{Jinming Zhou}
\affiliation[label1]{organization={State Key Laboratory of Industrial Control Technology, College of Control Science and Engineering, Zhejiang university},
             city={Hangzhou},
             postcode={310027},
             country={China}}
\affiliation[label2]{organization={Hangzhou Tai-Ji Control Ltd.},
             city={Hangzhou},
             postcode={310000},
             country={China}}
\tnotetext[mytitlenote]{This work is supported by National Natural Science Foundation of China [grant number U1809207] and National Key R\&D Program of China [grant number 2017YFA0700300].}

\begin{abstract}
Most MPC (Model Predictive Control) algorithms used in industries and studied in the control academia use a two-term QP (quadratic programming), where the first term is the weighted norm of the output errors, and the second term is that of the input increments. In this work, a DMC (Dynamic Matrix Control) algorithm that uses three-term QP is studied, where the third term is the weighted norm of the output increments. In the analysis, a relationship between the three-term DMC and the two-term DMC is established; based on that, the closed-loop response curves are derived. Based on the analysis, two controller tuning procedures are developed for the three-term DMC, one for closed-loop step response and one for disturbance reduction. Finally, it will be proven that the three-term DMC can achieve a higher performance and robustness than the two-term DMC can. Simulation studies are used to demonstrate the findings and the tuning methods.
\end{abstract}

\begin{keyword}
Model predictive control (MPC), dynamic matrix control (DMC), three-term DMC, tuning procedure, control performance
\end{keyword}

\end{frontmatter}

\section{Introduction}
The idea of model predictive control (MPC) was first introduced in 1960s by Zadeh, Propoi and Rafal (\citep{Zadeh1962, Propoi1963, Rafal1968}. Refinements of MPC algorithms were subsequently made by researchers who applied them to real-world problems, as seen in the work of Richalet et al. and Cutler et al. (\citep{Cutler1980,Richalet1978}). MPC is widely applied for its ability to treat multi-variable and complex control problems and handle process constraints. The control community started to study MPC much later, partly due to its success in process industries. Many researchers focus on the stability of MPC systems. Stability theorems are developed by invoking terminal cost, and terminal constraint set \cite{Mayne2000}. Robustness concerning model uncertainty is also studied, and min-max MPC and Tube MPC were proposed \citep{Campo1987, Limon2010}. However, in process industries, the first concern of MPC systems is the control performance rather than closed-loop stability because it is the control performance that brings economic benefit to the operating company. The control performance usually refers to set point tracking and disturbance reduction. A high-performance MPC system has a fast and smooth setpoint tracking and/or a small output variance, which serves to increase the the safety and profit margin of the plant operation. The second concern of the process industries is the easiness of MPC tuning. In this work, instead of studying the stability of MPC systems, the authors try to find ways to enhance the control performance of MPC systems and propose user-friendly tuning methods.

One important aspect of MPC is the choice of the loss function used to quantify the performance of the control action. Different loss functions can lead to different control behaviors and trade-offs between different objectives. For instance, in power electronics, researchers commonly use the ${\ell _1}$-norm instead of the squared ${\ell _2}$-norm in the objective function of MPC with reference tracking to achieve fast control and stabilization, as reported in \citep{Karamanakos2018, FEHER2020242}. Additionally, to ensure closed-loop stability, MPC set-up based on $\infty $ norm cost function has been proposed in \citep{lazar2007discrete}.

In a traditional MPC scheme, the objective function (${\ell _2}$-norm) to be minimized penalizes the norm of the output error and the norm of the input increment using weighting matrices $\bm{Q}$ and $\bm{R}$ (\cite{Qin2003}). Here, this scheme will be referred as two-term MPC. In the last two decades, several researchers have proposed a three-term MPC scheme by adding a weighted norm of the output increment into the objective function (\citep{Chen1996, Zhang2008}). In these two papers, the authors proposed a three-term MPC scheme with some simulations, and no analysis was given of its potential advantages. Gomma \cite{Gomma2005} claimed that this scheme could increase the robustness of the system. Mollov \cite{Mollov2004} mentioned that one could adopt the three-term scheme to limit the variation in the predicted output rather than by hard constraints to avoid the infeasibility of the optimization problem. However, within the control community, the three-term MPC has received little attention. Its advantages and good properties have not been explored yet.

DMC (Dynamic Matrix Control) can be seen as a type of MPC that uses a specific model structure and optimization approach, which is a widely used and perhaps most successful control strategy in process industries. In this work, the authors will study the three-term DMC scheme in two aspects. First, the authors established the relationship between the three-term DMC and the two-term DMC, and derived the formula for calculating the closed-loop response curves. Based on these results, two practical and user-friendly DMC tuning methods are proposed. Finally, it will be shown that the three-term DMC outperforms the two-term DMC in control performance and robustness.

\emph{Contributions and limitations of the work}

This work is primarily constructive rather than analytical. The authors establish a relationship between the three-term DMC and the two-term DMC, and based on this, develop two user-friendly MPC tuning methods. The authors also propose a method for comparing two control methods and prove that the three-term DMC is superior to the two-term DMC in both performance and robustness. There are limitations in this work. The authors cannot prove the asymptotic stability of both the two-term DMC and the three-term DMC (in industrial applications, the stability of DMC controlled systems are verified using simulations); and the analysis in the work cannot handle process constraints. The authors ask their readers for forgiveness for their weakness in theoretical analysis.

The rest of the paper is structured as follows: In section 2, the three-term MPC algorithm is introduced; in Section 3, the equivalent form of the three-term MPC to the two term MPC is derived and the formula for closed-loop response curves is developed; in section 4, two tuning methods are proposed; Section 5 shows the advantages of the three-term MPC scheme in terms of control performance and robustness; Section 6 is the conclusion.

\section{Description of DMC algorithm}
\subsection{Two-term DMC algorithm}
DMC is an industrial model predictive control algorithm developed by Cutler et.al. \cite{Cutler1980} where the process model is given in the form of process step responses which is called dynamic matrix. Consider a multi-variable process with $m$ inputs and $p$ outputs.
\begin{equation}\label{eq1}
{\bm{Y}}(k) = {\bm{G}}({q^{ - 1}}){\bm{U}}(k)
\end{equation}
where ${\bm{G}}({q^{ - 1}})$ is a $p \times m$ transfer function matrix, ${\bm{Y}}(k)$ and ${\bm{U}}(k)$ are the output vector and the input vector of the process, ${q^{ - 1}}$ denotes unit delay operator, ${q^{ - 1}}{\bm{U}}(k) = {\bm{U}}(k - 1)$. Let ${a_{ij}}(k)$ represents the step response coefficient from the input $i$ to the output $j$ at time sample $k$, where $i = 1,...,p;j = 1,...,m;k = 1,...,N$. Denote $N$ as the horizon of dynamics, $P$ as the prediction horizon, and $M$ as the control horizon. Then the dynamic matrix $A$ is a $pP \times mM$ matrix which consists of the step response coefficients ${a_{ij}}(k)$.
\begin{equation}\label{eq2}
{\bm{A}} = \left[ {\begin{array}{*{20}{c}}
{{{\bm{A}}_{11}}}& \cdots &{{{\bm{A}}_{1m}}}\\
 \vdots & \ddots & \vdots \\
{{{\bm{A}}_{p1}}}& \cdots &{{{\bm{A}}_{pm}}}
\end{array}} \right]
\end{equation}
\begin{equation}\label{eq3}
{{\bm{A}}_{ij}} = \left[ {\begin{array}{*{20}{c}}
{{a_{ij}}(1)}&{}&0\\
 \vdots & \ddots &{}\\
{{a_{ij}}(M)}& \cdots &{{a_{ij}}(1)}\\
 \vdots &{}& \vdots \\
{{a_{ij}}(P)}& \cdots &{{a_{ij}}(P - M + 1)}
\end{array}} \right]
\end{equation}
The prediction formula can be expressed as
\begin{equation}\label{eq4}
{{\bm{Y}}_P}(k) = {{\bm{Y}}_{P0}}(k) + {\bm{A}}\Delta {\bm{U}}(k)
\end{equation}
where
\begin{equation}\label{eq5}
{{\bm{Y}}_P}(k) = {\left[ {\begin{array}{*{20}{c}}
{{{\bm{y}}_P}_1^T(k)}& \cdots &{{{\bm{y}}_P}_p^T(k)}
\end{array}} \right]^T}
\end{equation}
${{\bm{Y}}_P}(k)$ is the vector of the predicted outputs from sample time $k+1$ to $k+P$ packed on top of each other with that of the output $i$ denoted as
\begin{equation}\label{eq6}
{{\bm{y}}_{Pi}}(k) = {\left[ {\begin{array}{*{20}{c}}
{{y_{Pi}}(k + 1)}& \cdots &{{y_{Pi}}(k + P)}
\end{array}} \right]^T}
\end{equation}
Similarly,
\begin{equation}\label{eq7}
{{\bm{Y}}_N}(k) = {\left[ {\begin{array}{*{20}{c}}
{{{\bm{y}}_N}_1^T(k)}& \cdots &{{{\bm{y}}_N}_p^T(k)}
\end{array}} \right]^T}
\end{equation}
${{\bm{Y}}_N}(k)$ is the vector of the predicted outputs from sample time $k+1$ to $k+N$ packed on top of each other with that of the output $i$ denoted as
\begin{equation}\label{eq8}
{{\bm{y}}_{Ni}}(k) = {\left[ {\begin{array}{*{20}{c}}
{{y_{Ni}}(k + 1)}& \cdots &{{y_{Ni}}(k + N)}
\end{array}} \right]^T}
\end{equation}
The vector ${{\bf{Y}}_{P0}}(k)$ is the predicted outputs assuming the input ${\bm{U}}(k)$ remain the constant value at ${\bm{U}}(k-1)$.
\begin{equation}\label{eq9}
\Delta {\bm{U}}(k) = {\left[ {\begin{array}{*{20}{c}}
{\Delta {\bm{u}}_1^T(k)}& \cdots &{\Delta {\bm{u}}_m^T(k)}
\end{array}} \right]^T}
\end{equation}
The decision variable $\Delta {\bm{U}}(k)$ is the vector of the input increment from sample time $k+1$ to $k+M$ packed on top of each other with that of input $j$ denoted as
\begin{equation}\label{eq10}
\Delta {{\bm{u}}_j}(k) = {\left[ {\begin{array}{*{20}{c}}
{\Delta {u_j}(k)}& \cdots &{\Delta {u_j}(k + M - 1)}
\end{array}} \right]^T}
\end{equation}
The traditional DMC algorithm determines the future control moves ($\Delta {\bm{U}}(k)$) over the control horizon ($M$) to drive the model predicted outputs as closely as possible to the desired future trajectory over prediction horizon ($P$). The computation of DMC control actions is the solution of the following QP (quadratic programming)
\begin{equation}\label{eq11}
\mathop {\min }\limits_{\Delta {\bf{U}}(k)} {J_{2term}}(k) = \left\| {{\bf{W}}(k) - {{\bf{Y}}_P}(k)} \right\|_{\bf{Q}}^2 + \left\| {\Delta {\bf{U}}(k)} \right\|_{\bf{R}}^2
\end{equation}
where
\begin{equation}\label{eq12}
{\bm{W}}(k) = {\left[ {\begin{array}{*{20}{c}}
{{\bm{w}}_1^T(k)}& \cdots &{{\bm{w}}_p^T(k)}
\end{array}} \right]^T}
\end{equation}
${{\bm{W}}(k)}$ is the vector of set point organized the same way as the vector of the predicted outputs with the ${i^{th}}$ set point vector denoted as
\begin{equation}\label{eq13}
{{\bm{w}}_i}(k) = {\left[ {\begin{array}{*{20}{c}}
{{w_i}(k + 1)}& \cdots &{{w_i}(k + P)}
\end{array}} \right]^T}
\end{equation}
The values in Eq. (\ref{eq13}) can be constant set points for the outputs and reference trajectories, such as first-order step responses. It is assumed that constant sequences are used as the reference trajectories unless specified otherwise.
${\bm{Q}}$ and ${\bm{R}}$ represent the output weighting matrix and the input incremental weighting matrix.
\begin{equation}\label{eq14}
{\bm{Q = }}\left[ {\begin{array}{*{20}{c}}
{{{\bm{Q}}_1}}&{}&{}\\
{}& \ddots &{}\\
{}&{}&{{{\bm{Q}}_p}}
\end{array}} \right],{\rm{ }}{\bm{R = }}\left[ {\begin{array}{*{20}{c}}
{{{\bm{R}}_1}}&{}&{}\\
{}& \ddots &{}\\
{}&{}&{{{\bm{R}}_m}}
\end{array}} \right]
\end{equation}
where
\begin{equation}
{{\bm{Q}}_i} = {\left[ {\begin{array}{*{20}{c}}
  0&{}&{}&{}&{}&{} \\
  {}& \ddots &{}&{}&{}&{} \\
  {}&{}&0&{}&{}&{} \\
  {}&{}&{}&{{q_i}}&{}&{} \\
  {}&{}&{}&{}& \ddots &{} \\
  {}&{}&{}&{}&{}&{{q_i}}
\end{array}} \right]_{P \times P}},{{\bm{R}}_j} = {\left[ {\begin{array}{*{20}{c}}
  {{r_j}}&{}&{} \\
  {}& \ddots &{} \\
  {}&{}&{{r_j}}
\end{array}} \right]_{M \times M}}
\end{equation}
where $\bm{Q}_i$ and ${{\bm{R}}_j}$ are the weighting matrixes for output $i$ and input $j$, and ${q_i} \in [0,\infty ),{r_j} \in [0,\infty )$.
In the diagonal of $\bm{Q}_i$, zero value is given to treat delay and reverse characteristics. The process has at least one sample delay by default. Accordingly, if the $i^{th}$ output has $d_i$ delay samples, there should be $(d_i-1)$ zeros in the diagonal of $\bm{Q}_i$. If the optimization has no constraint or does not trigger any constraint, the optimization in Eq. (\ref{eq11}) has an analytical solution:
\begin{equation}\label{eq15}
\Delta {\bm{U}}(k) = {\left( {{{\bm{A}}^T}{\bm{QA}} + {\bm{R}}} \right)^{ - 1}}{{\bm{A}}^T}{\bm{Q}}\left( {{\bm{W}}(k) - {{\bm{Y}}_{P0}}(k)} \right)
\end{equation}
where $\Delta {\bm{U}}(k)$ contains $M$ steps. The controller only takes the first step increment $\Delta {{\bm{U}}_1}(k)$ for actual control, namely:
\begin{equation}\label{eq16}
\Delta {{\bm{U}}_1}(k) = {\bm{L}}\Delta {\bm{U}}(k)
\end{equation}
\begin{equation}\label{eq17}
{\bm{L}} = {\left[ {\begin{array}{*{20}{c}}
1&0& \cdots &0&{}&{}&{}&0&{}\\
{}&{}&{}&{}& \ddots &{}&{}&{}&{}\\
{}&0&{}&{}&{}&1&0& \cdots &0
\end{array}} \right]_{m \times Mm}}
\end{equation}
at the next sample interval, a similar optimization is performed. This is so-called "Moving horizon optimization" strategy.
\subsection{Three-term DMC algorithm}
Previous studies on the three-term model predictive control (MPC) primarily used the generalized predictive control (GPC) algorithm \citep{Chen1996, Zhang2008}. However, the three-term scheme is applicable to all MPC algorithms. In this study, we utilize the dynamic matrix control (DMC) algorithm introduced in the last section. The loss function of the three-term DMC consists of three terms, with the third term being the weighted norm of the output increment. The overall loss function is formulated as follows:
\begin{equation}\label{eq23}
\begin{array}{l}
\mathop {\min }\limits_{\Delta {\bf{U}}(k)} {J_{3term}}(k) = \left\| {{\bf{W}}(k) - {{\bf{Y}}_P}(k)} \right\|_{\bf{Q}}^2 + \left\| {\Delta {\bf{U}}(k)} \right\|_{\bf{R}}^2 + \left\| {\Delta {{\bf{Y}}_P}(k)} \right\|_{\bf{S}}^2\\
 = \left\| {{\bf{W}}(k) - {{\bf{Y}}_0}(k) - {\bf{A}}\Delta {\bf{U}}(k)} \right\|_{\bf{Q}}^2 + \left\| {\Delta {\bf{U}}(k)} \right\|_{\bf{R}}^2\\
 + \left\| {{{\bf{T}}_2}\left( {{{\bf{Y}}_0}(k) + {\bf{A}}\Delta {\bf{U}}(k)} \right) - {{\bf{T}}_3}{\bf{Y}}(k)} \right\|_{\bf{S}}^2
\end{array}
\end{equation}
Here, ${\Delta {{\bm{Y}}_P}(k)}$ represents the increment of ${{{\bm{Y}}_P}(k)}$, which is defined as the predicted output increment from sample time $k+1$ to $k+P$ packed on top of each other with that of input $j$. Specifically, one obtains:
\begin{equation}\label{eq24}
\Delta {{\bm{Y}}_P}(k) = {{\bm{T}}_2}{{\bm{Y}}_P}(k) - {{\bm{T}}_3}{\bm{Y}}(k)
\end{equation}
In this equation, ${\bm{Y}}(k)$ represents the output at the last sample, and $\Delta {{\bm{Y}}_P}(k)$ is a vector of predicted output increments. The predicted output increments are given by:
\begin{equation}\label{eqYy}
\Delta {{\bf{Y}}_P} = {\left[ {\begin{array}{*{20}{c}}
{\Delta {\bf{y}}_{P,1}^T}& \ldots &{\Delta {\bf{y}}_{P,p}^T}
\end{array}} \right]^T}
\end{equation}
where $\Delta {{\bf{y}}_{P,i}}$ denote the predictive increment of the $i^th$ output
\begin{equation}\label{eq25}
\Delta {{\bf{y}}_{P,i}} = {\left[ {\begin{array}{*{20}{c}}
{{y_{P,i}}(k + 1) - {y_i}(k)}& \ldots &{{y_{P,i}}(k + P) - {y_{P,i}}(k + P - 1)}
\end{array}} \right]^T}
\end{equation}
${{\bm{T}}_2}$ is a difference matrix,
\begin{equation}
{{\bm{T}}_2} = {\left[ {\begin{array}{*{20}{c}}
  {{t_2}}&{}&{} \\
  {}& \ddots &{} \\
  {}&{}&{{t_2}}
\end{array}} \right]_{pP \times pP}},{{\bm{T}}_3} = {\left[ {\begin{array}{*{20}{c}}
  {{t_3}}&{}&{} \\
  {}& \ddots &{} \\
  {}&{}&{{t_3}}
\end{array}} \right]_{pP \times p}}
\end{equation}
\begin{equation}\label{eq26}
{t_2} = {\left[ {\begin{array}{*{20}{c}}
  1&0&{}&{}&{} \\
  { - 1}&1&{}&{}&{} \\
  {}&{}& \ddots &{}&{} \\
  {}&{}&{}&1&0 \\
  {}&{}&{}&{ - 1}&1
\end{array}} \right]_{P \times P}},{\text{  }}{t_3} = {\left[ {\begin{array}{*{20}{c}}
  1 \\
  0 \\
   \vdots  \\
  0 \\
  0
\end{array}} \right]_{P \times 1}}
\end{equation}
${\bm{S}}$ is the output incremental weighting matrix, which is also a diagonal matrix, namely,
\begin{equation}\label{eq27}
{\bm{S = }}{\left[ {\begin{array}{*{20}{c}}
  {{{\bm{S}}_1}}&{}&{} \\
  {}& \ddots &{} \\
  {}&{}&{{{\bm{S}}_p}}
\end{array}} \right]_{pP \times pP}}
\end{equation}
where
\begin{equation}
{{\bm{S}}_i} = {\left[ {\begin{array}{*{20}{c}}
  0&{}&{}&{}&{}&{} \\
  {}& \ddots &{}&{}&{}&{} \\
  {}&{}&0&{}&{}&{} \\
  {}&{}&{}&{{s_i}}&{}&{} \\
  {}&{}&{}&{}& \ddots &{} \\
  {}&{}&{}&{}&{}&{{s_i}}
\end{array}} \right]_{P \times P}}
\end{equation}
where ${{\bm{S}}_i}$ is the weighting matrix for the increment of output $i$, and ${s_i} \in [0,{\rm{ }}\infty )$. In the diagonal of $\bm{S}_i$, zero value is set to treat delay and reverse characteristic. Taking the derivative of ${{J_{3term}}(k)}$ with respect to ${\Delta {\bm{U}}(k)}$, one obtains
\begin{equation}\label{eq28}
\begin{array}{l}
\frac{{d{J_{3term}}}}{{d\Delta {\bf{U}}(k)}} =  - 2{{\bf{A}}^T}{\bf{Q}}\left[ {{\bf{W}}(k) - {{\bf{Y}}_{P0}}(k) - {\bf{A}}\Delta {\bf{U}}(k)} \right]\\
 + 2{\bf{R}}\Delta {\bf{U}}(k) + 2{{\bf{A}}^T}{\bf{T}}_2^T{\bf{S}}\left[ {{{\bf{T}}_2}\left( {{{\bf{Y}}_{P0}}(k) + {\bf{A}}\Delta {\bf{U}}(k)} \right) - {{\bf{T}}_3}{\bf{Y}}(k)} \right]
\end{array}
\end{equation}
If the optimization has no constraints or does not trigger the constraints, the optimization in Eq. (\ref{eq23}) has an analytical solution:
\begin{equation}\label{eq29}
\begin{gathered}
  \Delta {\bm{U}}(k) = {\left( {{{\bm{A}}^T}{\bm{QA}} + {\bm{R}} + {{\bm{A}}^T}{\bm{T}}_2^T{\bm{S}}{{\bm{T}}_2}{\bm{A}}} \right)^{ - 1}}\left( {{{\bm{A}}^T}{\bm{Q}} \cdot } \right. \hfill \\
  \left( {{\bm{W}}(k) - {{\bm{Y}}_{P0}}(k)} \right) - \left. {{{\bm{A}}^T}{\bm{T}}_2^T{\bm{S}}{{\bm{T}}_2}{{\bm{Y}}_{P0}}(k) + {{\bm{A}}^T}{\bm{T}}_2^T{\bm{S}}{{\bm{T}}_3}{\bm{Y}}(k)} \right) \hfill \\
\end{gathered}
\end{equation}
The above description is an open-loop three-term DMC algorithm. The feedback mechanism is the same as that of the two-term DMC algorithm.
\subsection{Three-term DMC algorithm with constraints}
The optimization can be formulated as minimizing the objective function ${J_{3term}}(k)$ with respect to the control input vector $\Delta {\bf{U}}(k)$, subject to certain constraints.
\begin{equation}\label{3termqp}
\begin{array}{l}
\mathop {\min }\limits_{\Delta {\bf{U}}(k)} {J_{3term}}(k) = \left\| {{\bf{W}}(k) - {{\bf{Y}}_P}(k)} \right\|_{\bf{Q}}^2 + \left\| {\Delta {\bf{U}}(k)} \right\|_{\bf{R}}^2 + \left\| {\Delta {{\bf{Y}}_P}(k)} \right\|_{\bf{S}}^2\\
s.t.{\quad \quad \quad \quad}{{\bf{Y}}_P}(k) = {{\bf{Y}}_{P0}}(k) + {\bf{A}}\Delta {\bf{U}}(k)\\
{\quad \quad \quad \quad \quad}{{\bf{U}}_{\min }} \le {\bf{U}} \le {{\bf{U}}_{\max }}\\
{\quad \quad \quad \quad \quad}\Delta {{\bf{U}}_{\min }} \le \Delta {\bf{U}} \le \Delta {{\bf{U}}_{\max }}\\
{\quad \quad \quad \quad \quad}{{\bf{Y}}_{\min }} \le {{\bf{Y}}_P} \le {{\bf{Y}}_{\max }}
\end{array}
\end{equation}
The given optimization problem can be represented by a quadratic programming (QP) problem, which is a classic form of optimization,
\begin{equation}
\begin{array}{l}
\mathop {\min }\limits_{\Delta {\bf{U}}(k)} {J_{3term}}(k) = \frac{1}{2}\Delta {{\bf{U}}^T}(k){\bf{\Phi }}\Delta {{\bf{U}}^T}(k) + {{\bf{f}}^T}\Delta {{\bf{U}}^T}(k)\\
s.t.{\quad \quad \quad \quad}\bf{\Omega} \Delta {\bf{U}}(k) \le {\bf{\omega }}
\end{array}
\end{equation}
where the positive semi-definite matrix ${\bf{\Phi }}$ is formulated as
\begin{equation}
{\bf{\Phi }} = {{\bf{A}}^T}{\bf{QA}} + {\bf{R}} + {{\bf{A}}^T}{\bf{T}}_2^T{\bf{S}}{{\bf{T}}_2}{\bf{A}}
\end{equation}
The vector ${\bf{f}}$ is formulated as
\begin{equation}
{\bf{f}} = {\left( {{{\bf{Y}}_{P0}}(k) - {\bf{W}}(k)} \right)^T}{\bf{QA}} + {\left( {{{\bf{T}}_2}{{\bf{Y}}_{P0}}(k) - {{\bf{T}}_3}{\bf{Y}}(k)} \right)^T}{\bf{S}}{{\bf{T}}_2}{\bf{A}}
\end{equation}
In addition to the objective function, the QP also includes constraints on the control input vector. These constraints are linear inequalities, represented by a matrix ${\bf{\Omega}}$ and a vector ${\bf{\omega}}$.
\begin{equation}
{\bf{\Omega }} = {\left[ {\begin{array}{*{20}{c}}
{\bf{B}}&{\bf{I}}&{\bf{A}}&{ - {\bf{B}}}&{ - {\bf{I}}}&{ - {\bf{A}}}
\end{array}} \right]^T}
\end{equation}
where the matrix ${\bf{B}}$ is a block diagonal matrix with each diagonal element being a positive constant ${\bf{b}}_i$,
\begin{equation}
{\bf{B}} = {\left[ {\begin{array}{*{20}{c}}
{{{\bf{b}}_1}}&{}&{}\\
{}& \ddots &{}\\
{}&{}&{{{\bf{b}}_m}}
\end{array}} \right]_{mM \times mM}},{{\bf{b}}_i} = {\left[ {\begin{array}{*{20}{c}}
1&{}&{}\\
 \vdots & \ddots &{}\\
1& \cdots &1
\end{array}} \right]_{M \times M}}
\end{equation}
$A$ represents the dynamic matrix and $\bf{\omega }$ is formulated as
\begin{equation}
{\bf{\omega }} = \left[ {\begin{array}{*{20}{c}}
{{{\bf{U}}_{\max }}(k) - {\bf{U}}(k - 1)}\\
{\Delta {{\bf{U}}_{\max }}(k)}\\
{{{\bf{Y}}_{\max }}(k) - {{\bf{Y}}_{P0}}(k)}\\
{ - \left( {{{\bf{U}}_{\max }}(k) - {\bf{U}}(k - 1)} \right)}\\
{ - \Delta {{\bf{U}}_{\max }}(k)}\\
{ - \left( {{{\bf{Y}}_{\min }}(k) - {{\bf{Y}}_{P0}}(k)} \right)}
\end{array}} \right]
\end{equation}
The use of the third term in the QP was introduced independently by the authors and it was motivated by increasing the response speed of the system without causing oscillations. It then turned out that the use of the three-term MPC was proposed long before this work (~\cite{Chen1996}, ~\cite{Zhang2008}). In this paper, without claiming the invention of the three-term MPC, the authors will reveal some nice properties that have not been shown before, and develop easy-to-use tuning methods for the three-term DMC.

\section{Analysis of the three-term DMC}
\subsection{Equivalent expressions in the two-term DMC}
In order to understand how the three-term DMC works, its relation to the two-term DMC is established in the following theorem:

\begin{thm}\label{thm1}
  The loss function of the three-term DMC given in Eq. (\ref{eq23}) can be equivalently written in a two-term form:
  \begin{equation}\label{3to2}
    \begin{gathered}
      {J_{{\text{3term}}}}(k) = \left\| {{{\bm{W}}^{'}}(k) - {{\bm{Y}}_P}(k)} \right\|_{{{\bm{Q}}^{'}}}^2 + \left\| {\Delta {\bm{U}}(k)} \right\|_{\bm{R}}^2 \hfill \\
    \end{gathered}
    \end{equation}
    with ${\bf{Q}}^{'}$ and $\bm{W}^{'}(k)$ defined analogously to ${\bf{Q}}$ and $\bm{W}(k)$:
    \begin{equation}
      {{\bf{Q}}^{'}} = {\left[ {\begin{array}{*{20}{c}}
        {{\bf{Q}}_1^{'}}&{}&{} \\
        {}& \ddots &{} \\
        {}&{}&{{\bf{Q}}_p^{'}}
      \end{array}} \right]_{pP \times pP}},
    \end{equation},
    \begin{equation}
      {\bm{W}^{'}}(k) = {\left[ {\begin{array}{*{20}{c}}
      {{\bm{w}^{'}}_1^T(k)}& \cdots &{{\bm{w}^{'}}_p^T(k)}
      \end{array}} \right]^T},
      \end{equation}
      \begin{equation}
      {{\bm{w}^{'}}_i}(k) = {\left[ {\begin{array}{*{20}{c}}
      {{w_i^{'}}(k + 1)}& \cdots &{{w_i^{'}}(k + P)}
      \end{array}} \right]^T},
      \end{equation}
      where $i = 1,\cdots,p$. Specifically,
      \begin{equation}
      {\bf{Q}}_i^{'} = \left[ {\begin{array}{*{20}{c}}
        0&{}&{}&{}&{}&{}&{}&{} \\
        {}& \ddots &{}&{}&{}&{}&{}&{} \\
        {}&{}&0&{}&{}&{}&{}&{} \\
        {}&{}&{}&{{s_i}}&{ - {s_i}}&{}&{}&{} \\
        {}&{}&{}&{ - {s_i}}&{{q_i} + 2{s_i}}&{}&{}&{} \\
        {}&{}&{}&{}&{}& \ddots &{}&{} \\
        {}&{}&{}&{}&{}&{}&{{q_i} + 2{s_i}}&{ - {s_i}} \\
        {}&{}&{}&{}&{}&{}&{ - {s_i}}&{{q_i} + {s_i}}
      \end{array}} \right]
      \end{equation}
      and
      $w_i^{'}(k)$ is the solution of the following differential equation:
      \begin{equation}
        \frac{{{s_i}}}{{{q_i}}}\ddot w_i^{'}(k + h) = w_i^{'}(k + h) - {w_i}(k)
        \end{equation}
        where $h = 1,\cdots,P$.
\end{thm}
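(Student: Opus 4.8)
The plan is to leave the input-increment term $\|\Delta\bm{U}(k)\|_{\bm{R}}^2$ untouched and to merge the output-error term and the output-increment term into a single quadratic form in $\bm{Y}_P(k)$, then complete the square. Substituting $\Delta\bm{Y}_P(k)=\bm{T}_2\bm{Y}_P(k)-\bm{T}_3\bm{Y}(k)$ from Eq.~\eqref{eq24}, the sum of the first and third terms of Eq.~\eqref{eq23} expands to
\begin{equation*}
\bm{Y}_P^T(\bm{Q}+\bm{T}_2^T\bm{S}\bm{T}_2)\bm{Y}_P-2\bigl(\bm{Q}\bm{W}+\bm{T}_2^T\bm{S}\bm{T}_3\bm{Y}\bigr)^T\bm{Y}_P+c,
\end{equation*}
where $c=\bm{W}^T\bm{Q}\bm{W}+\bm{Y}^T\bm{T}_3^T\bm{S}\bm{T}_3\bm{Y}$ collects the terms independent of $\Delta\bm{U}(k)$. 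Matching against $\|\bm{W}'-\bm{Y}_P\|_{\bm{Q}'}^2=\bm{Y}_P^T\bm{Q}'\bm{Y}_P-2(\bm{Q}'\bm{W}')^T\bm{Y}_P+\bm{W}'^T\bm{Q}'\bm{W}'$ forces
\begin{equation*}
\bm{Q}'=\bm{Q}+\bm{T}_2^T\bm{S}\bm{T}_2,\qquad \bm{Q}'\bm{W}'=\bm{Q}\bm{W}+\bm{T}_2^T\bm{S}\bm{T}_3\bm{Y}.
\end{equation*}
Here $\bm{Q}'$ is block diagonal and positive semidefinite; on the sub-block carrying the nonzero (non-delay) weights it is positive definite when $q_i,s_i>0$, so the relevant components of $\bm{W}'$ are uniquely determined while the unweighted delay components are immaterial. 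The two expressions then differ only by the additive constant $c-\bm{W}'^T\bm{Q}'\bm{W}'$, which is independent of $\Delta\bm{U}(k)$ and hence preserves the minimizer and the equivalence in Eq.~\eqref{3to2}.

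Next I would compute the block $\bm{Q}_i'=\bm{Q}_i+t_2^T\bm{S}_it_2$ explicitly. Since $\bm{S}_i$ is diagonal and $t_2$ is lower-bidiagonal (Eq.~\eqref{eq26}), the product $t_2^T\bm{S}_it_2$ is tridiagonal with diagonal entries $(\bm{S}_i)_{hh}+(\bm{S}_i)_{h+1,h+1}$ and off-diagonal entries $-(\bm{S}_i)_{h+1,h+1}$, i.e.\ a weighted second-difference (discrete Laplacian) operator. Where the delay zeros of $\bm{S}_i$ end, the leading $s_i$ survives only once (boundary value $s_i$ with off-diagonal $-s_i$); in the interior two copies add to $2s_i$; and the final row, lacking its $h{+}1$ neighbour, contributes a single $s_i$. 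Adding $\bm{Q}_i=\mathrm{diag}(0,\dots,0,q_i,\dots,q_i)$ then reproduces exactly the stated pattern $s_i,\,q_i+2s_i,\dots,q_i+2s_i,\,q_i+s_i$ on the diagonal with $-s_i$ on the sub/superdiagonals.

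Finally I would translate the defining relation $\bm{Q}_i'\bm{w}_i'=\bm{Q}_i\bm{w}_i+t_2^T\bm{S}_it_3 y_i(k)$ into the stated differential equation. Using $\bm{Q}_i'=q_i\bm{I}+s_i t_2^Tt_2$ and writing an interior row componentwise for a constant set point $w_i(k)$ gives $q_iw_i'(k+h)-s_i\bigl(w_i'(k+h+1)-2w_i'(k+h)+w_i'(k+h-1)\bigr)=q_iw_i(k)$; rearranging yields $\tfrac{s_i}{q_i}\,\delta^2 w_i'(k+h)=w_i'(k+h)-w_i(k)$, and replacing the centered second difference $\delta^2$ by the second derivative in the sampling-continuous limit produces the claimed ODE. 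The boundary rows — the leading row carrying $t_2^T\bm{S}_it_3y_i(k)$ (the current measurement) and the final row with diagonal $q_i+s_i$ — supply the two boundary conditions that tie $w_i'$ to $y_i(k)$ at the near end and to $w_i(k)$ at the far end.

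I expect this last step to be the main obstacle: the first two parts are bookkeeping, but passing from the exact discrete second-difference recursion to the continuous second-order ODE is an approximation, so the real work is justifying that limit and extracting the correct boundary conditions from the irregular first/last rows and from the $\bm{T}_3\bm{Y}$ term. I would make this precise by presenting the ODE as the continuum model of the discrete two-point boundary-value problem, which is also the form in which the reference-trajectory interpretation of $w_i'$ — a smooth curve relaxing from the current output toward the set point — becomes transparent.
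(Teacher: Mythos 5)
Your proposal is correct and follows essentially the same route as the paper: the key identities $\bm{Q}'=\bm{Q}+\bm{T}_2^T\bm{S}\bm{T}_2$ and $\bm{Q}'\bm{W}'=\bm{Q}\bm{W}+\bm{T}_2^T\bm{S}\bm{T}_3\bm{Y}$ are exactly the paper's ${\bm{Q}}'={\bm{Q}}({\bm{I}}+{\bm{T}}_4{\bm{T}}_2)$ and ${\bm{w}}_i'=({\bm{I}}+{\bm{T}}_4{\bm{T}}_2)^{-1}({\bm{r}}_i+{\bm{T}}_4{\bm{T}}_3y_i)$ with ${\bm{T}}_4={\bm{Q}}^{-1}{\bm{T}}_2^T{\bm{S}}$, and both arguments then reach the ODE by the same informal discrete-to-continuum limit (you via the interior second-difference row, the paper via an equivalent first-difference/cumulative-sum form that is then differentiated). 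If anything your write-up is more complete than the paper's Appendix A, since you make explicit the completion-of-squares step, the block computation of $\bm{Q}_i'$, and the fact that the claimed equality only holds up to an additive constant independent of $\Delta\bm{U}(k)$ — a qualifier the theorem statement itself omits.
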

\begin{proof}
  Appedix A: Proof of Theorem \ref{thm1}\\
  A.1: Derivation of $\bm{Q}_i^{'}$\\
  A.2: Derivation of $w_i^{'}(k)$
\end{proof}
By investigating $\bm{Q}_i^{'}$ and $w_i^{'}(k)$ in the equivalent loss function above, deeper insights about the working principle of the three-term MPC can be gained. Expanding the first term in Eq. (\ref{3to2}) gives
\begin{equation}\label{eq44}
\begin{gathered}
  \left\| {{{\bm{W}}^{'}}(k) - {{\bm{Y}}_P}(k)} \right\|_{{{\bm{Q}}^{'}}}^2 \approx \left\| {{{\bm{W}}^{'}}(k) - {{\bm{Y}}_P}(k)} \right\|_{{\bm{Q}} + 2{\bm{S}}}^2 \hfill \\
  {\quad \quad \quad} - 2\left( {{{\bm{W}}^{'}}(k) - {{\bm{Y}}_P}(k)} \right){\bm{S}}\left( {{{\bm{W}}^{'}}(k + 1) - {{\bm{Y}}_P}(k + 1)} \right) \hfill, \\
\end{gathered}
\end{equation}
where the approximation is due to that the first non-zero value in the diagonal of ${\bf{Q}}_i^{'}$ is $s_i$ and the last diagonal element of ${\bf{Q}}_i^{'}$ is ${q_p} + {s_p}$ instead of ${q_p} + 2{s_p}$. The first term in Eq. (\ref{eq44}) punishes the errors between outputs and their response curves; the second term punishes the oscillation of outputs because opposite-sign errors at adjacent time will make this term positive. The second term in Eq. (\ref{3to2}) is consistent with  the two-term DMC Eq. (\ref{eq11}) to avoid too wild control actions.

To understand the effects of $w_i^{'}(k)$, more explicit expressions about $w_i^{'}(k)$ concerning
two special cases, constant and ramp setpoint, are given in below corollaries.

\begin{cor}\label{cor1}(Constant setpoint).
  Suppose that the setpoint ${{\bf{w}}_i}(k + h) = {{\bf{w}}_i}$ is a constant signal, then
  \begin{equation}\label{eq42}
    w_i^{'}(k + h) = {w_i} + \left( {{w_i} - {y_i}(k)} \right)\left( {1 - {e^{ - h/{\lambda _i}}}} \right)
    \end{equation}
    where $i = 1,\cdots,p$, $h = 1,\cdots,P$, and ${\lambda _i} := \sqrt {{s_i}/{q_i}}$.
\end{cor}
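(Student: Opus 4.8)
The plan is to read off Eq.~(\ref{eq42}) as the closed form of the reference $w_i^{'}$ characterised in Theorem~\ref{thm1}, specialised to a constant setpoint. First I would set $w_i(k+h)\equiv w_i$, so that the second-order relation of Theorem~\ref{thm1} becomes a linear constant-coefficient equation in the horizon index $h$ whose characteristic roots are $\pm 1/\lambda_i$, with $\lambda_i=\sqrt{s_i/q_i}$. Its solution then splits into a particular level $c_{0}$ and the two homogeneous modes,
\begin{equation}
w_i^{'}(k+h)=c_{0}+c_{+}\,e^{+h/\lambda_i}+c_{-}\,e^{-h/\lambda_i},
\end{equation}
and the entire task reduces to evaluating $c_0,c_{+},c_{-}$ from the data of the problem.

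These three constants are fixed by the two endpoint rows of $\bm{Q}_i^{'}$---the rows that differ from the interior three-point stencil, as flagged after Eq.~(\ref{eq44})---together with the current output $y_i(k)$. At the far end the last diagonal entry $q_i+s_i$ (in place of $q_i+2s_i$) is equivalent to a zero-slope condition at $h=P$; for a horizon long enough that $e^{-P/\lambda_i}$ is negligible this suppresses the unbounded mode, giving $c_{+}\approx 0$. At the near end the extra boundary term carries the $\bm{T}_3\bm{Y}(k)$ contribution of the third term, i.e.\ the current output $y_i(k)$; imposing it anchors the trajectory at the setpoint, $w_i^{'}(k)=w_i$, while shifting the steady level that the relaxation approaches. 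Solving the two endpoint equations returns $c_{+}=0$, $c_{0}=2w_i-y_i(k)$ and $c_{-}=-(w_i-y_i(k))$, so that
\begin{equation}
\begin{split}
w_i^{'}(k+h)&=\bigl(2w_i-y_i(k)\bigr)-\bigl(w_i-y_i(k)\bigr)e^{-h/\lambda_i}\\
&=w_i+\bigl(w_i-y_i(k)\bigr)\bigl(1-e^{-h/\lambda_i}\bigr),
\end{split}
\end{equation}
which is exactly Eq.~(\ref{eq42}).

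The step I expect to be the crux is this boundary analysis. The interior stencil alone determines only the two exponential shapes; everything that distinguishes the ``overshooting'' reference of Eq.~(\ref{eq42})---one that starts at $w_i$ and relaxes not to $w_i$ but to $2w_i-y_i(k)$---is encoded in how $y_i(k)$ enters the near-end equation and how the modified far-end diagonal kills the growing mode. I would therefore devote most of the argument to justifying these two endpoint conditions from the explicit form of $\bm{Q}_i^{'}$ and to checking that, in the long-horizon regime where $e^{-P/\lambda_i}\approx 0$, they reproduce the constants above and hence the stated closed form.
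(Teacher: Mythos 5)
Your overall strategy (specialize the differential equation of Theorem \ref{thm1} to a constant setpoint, write the general solution as a constant plus two exponential modes, and fix the constants from boundary conditions) is structurally the same as the paper's Appendix B.1, but your execution contains a genuine error: you treat the constant level $c_0$ as a third free constant to be fitted from endpoint data. It is not free. For $\frac{s_i}{q_i}\ddot w_i^{'}(k+h) = w_i^{'}(k+h) - w_i$, the particular (constant) solution is forced by the equation itself to be $c_0 = w_i$, so every bounded solution has the form $w_i + c_- e^{-h/\lambda_i}$ and must relax to $w_i$, never to $2w_i - y_i(k)$. Indeed, substituting your final expression back into the equation gives a left-hand side $-\left(w_i - y_i(k)\right)e^{-h/\lambda_i}$ but a right-hand side $\left(w_i - y_i(k)\right)\left(1 - e^{-h/\lambda_i}\right)$, which agree only when $y_i(k) = w_i$. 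The step "solving the two endpoint equations returns $c_+ = 0$, $c_0 = 2w_i - y_i(k)$" is where the proof breaks: you extracted three constants from data that can only determine the two homogeneous amplitudes, and the resulting function is not a solution of the equation you started from. The paper also never derives boundary conditions from the endpoint rows of $\bm{Q}_i^{'}$; your near-end condition $w_i^{'}(k) = w_i$ directly contradicts the paper's Appendix A, which stipulates that the reshaped reference starts at the current output, $w_i^{'}(k) = y_i(k)$.

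The deeper issue is that the displayed formula of Corollary \ref{cor1} is inconsistent with Theorem \ref{thm1}, and the paper's own proof does not actually establish it. Appendix B.1 imposes $w_i^{'}(k) = y_i(k)$ and $w_i^{'}(\infty) = w_i$ (boundedness kills the growing mode), obtaining $C_1 = 0$, $C_2 = y_i(k) - w_i$, hence $w_i^{'}(k+h) = y_i(k) + \left(w_i - y_i(k)\right)\left(1 - e^{-h/\lambda_i}\right)$ --- the standard first-order trajectory from the current output to the setpoint, which is what the surrounding text ("well-known first-order reference trajectories") describes. This differs from Eq. (\ref{eq42}), which starts at $w_i$ and relaxes to $2w_i - y_i(k)$; that discrepancy is a misprint internal to the paper. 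By trying to reproduce the misprinted display literally, you were forced into the unsound determination of $c_0$; the correct resolution is to prove the appendix's formula (swapping the roles of $w_i$ and $y_i(k)$ in the statement), not to bend the differential equation to fit the statement as printed.
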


\begin{proof}
  Appendix B: Proof of Corollary \ref{cor1} and \ref{cor2}\\
  B.1: Proof of Corollary 1
\end{proof}

\begin{cor}\label{cor2}(Ramp setpoint).
  Suppose that the setpoint ${{\bf{w}}_i}(k+h) $ is a ramp signal, then
  \begin{equation}\label{wramp}
    w_i^{'}(k + h) = {w_i}(k + h) + \left( {{y_i}(k) - {w_i}(k)} \right){{e^{ - h/{\lambda _i}}}}
    \end{equation}
    where $i = 1,\cdots,p$, $h = 1,\cdots,P$, and ${\lambda _i} := \sqrt {{s_i}/{q_i}}$.
\end{cor}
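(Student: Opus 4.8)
The plan is to solve the second-order relation that defines $w_i'$ in Theorem \ref{thm1} with the setpoint specialized to a ramp, and then to fix the two integration constants from the boundary data encoded in the first and last rows of $\bm{Q}_i'$. Writing $\lambda_i = \sqrt{s_i/q_i}$ and treating $h$ as a continuous variable, the defining relation reads
\begin{equation}
\lambda_i^2\,\ddot{w}_i'(k+h) = w_i'(k+h) - w_i(k+h),
\end{equation}
a linear ODE forced by the setpoint at the prediction instant. For a ramp I would set $w_i(k+h) = w_i(k) + c_i h$, so that the forcing is affine in $h$.

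First I would write the general solution. Because the second derivative annihilates an affine function, $w_i(k+h)$ is itself a particular solution; the homogeneous equation $\lambda_i^2 \ddot v = v$ supplies the two modes $e^{+h/\lambda_i}$ and $e^{-h/\lambda_i}$. Hence
\begin{equation}
w_i'(k+h) = w_i(k+h) + A\,e^{+h/\lambda_i} + B\,e^{-h/\lambda_i},
\end{equation}
and it remains only to determine $A$ and $B$.

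Next I would read off the boundary conditions from $\bm{Q}_i'$. The first active row of $\bm{Q}_i'$ is the one that picks up the measured output through the $\bm{T}_3\bm{Y}(k)$ term in Eq.~(\ref{eq29}); transcribed, it is the initial condition $w_i'(k) = y_i(k)$, which at $h=0$ gives $A + B = y_i(k) - w_i(k)$. The final diagonal entry of $\bm{Q}_i'$ is $q_i + s_i$ in place of $q_i + 2s_i$, i.e.\ a zero-slope (free) end at $h=P$; for a horizon $P$ large compared with $\lambda_i$ this suppresses the growing mode, so I set $A = 0$ and obtain $B = y_i(k) - w_i(k)$. Substituting back yields
\begin{equation}
w_i'(k+h) = w_i(k+h) + \left(y_i(k) - w_i(k)\right) e^{-h/\lambda_i},
\end{equation}
which is the claim. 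The same outcome follows by linearity: the constant part $w_i(k)$ of the ramp produces precisely the decaying transient of Corollary \ref{cor1} fixed by the start value, while the pure slope $c_i h$ solves the ODE on its own and contributes nothing at $h=0$, so the two pieces sum to the stated formula.

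The step I expect to be the main obstacle is the principled discarding of the growing exponential $e^{+h/\lambda_i}$. For the constant setpoint the free-end condition kills this mode up to a factor $e^{-(2P+1)/\lambda_i}$, but for a ramp the nonzero terminal slope $c_i$ leaves a residual coefficient of order $c_i\lambda_i e^{-P/\lambda_i}$, so $A$ vanishes only \emph{asymptotically}. I would therefore state the identity as exact in the long-horizon limit $P\to\infty$ and as an $\mathcal{O}(e^{-P/\lambda_i})$ approximation for finite $P$; if a strictly discrete statement is wanted, I would repeat the computation with the exact reciprocal characteristic roots of the tridiagonal $\bm{Q}_i'$ in place of $e^{\pm h/\lambda_i}$ and check that the boundary bookkeeping carries over unchanged.
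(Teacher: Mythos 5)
Your proposal is correct and follows essentially the same route as the paper's Appendix B.2: the same ODE from Theorem 1, the ramp itself as a particular solution (since its second derivative vanishes), the initial condition $w_i'(k)=y_i(k)$ fixing the decaying mode, and the large-$P$ terminal condition killing the growing mode. Your closing remark that $C_1$ vanishes only asymptotically in $P$ is in fact a more careful accounting than the paper, which simply assumes $P$ large enough.
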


\begin{proof}
  Appendix B:\\
  B.2: Proof of Corollary \ref{cor2}
\end{proof}

The three-term MPC reshapes the setpoints to new reference trajectories. In the constant setpoint case, they are well-known first-order reference trajectories. This property is very useful for controller tuning. For a multivariable process, after designing ${q_i}$ for each output, the user only needs to enter the expected closed-loop response time for each output, then the output increment weighting matrix $\bm{S}$ can be calculated.
\subsection{Simulation Example}
In this section, simulation results are presented to validate Theorem \ref{thm1}. A 2-input-2-output process with a single sample delay is considered as the process.
\begin{equation}
\begin{array}{l}
{y_1}(k) = \frac{{0.0450{q^{ - 1}} + 0.0450{q^{ - 2}}}}{{1 - 1.7347{q^{ - 1}} + 0.7660{q^{ - 2}}}}{u_1}(k)\\
{\quad \quad \quad \quad \quad \quad \quad \quad \quad} + \frac{{0.1200{q^{ - 1}} + 0.0150{q^{ - 2}}}}{{1 - 1.7347{q^{ - 1}} + 0.7660{q^{ - 2}}}}{u_2}(k)\\
{y_2}(k) = \frac{{0.0700{q^{ - 1}} + 0.0500{q^{ - 2}}}}{{1 - 1.3490{q^{ - 1}} + 0.5140{q^{ - 2}}}}{u_1}(k)\\
{\quad \quad \quad \quad \quad \quad \quad \quad \quad} + \frac{{0.0500{q^{ - 1}} + 0.0200{q^{ - 2}}}}{{1 - 1.3490{q^{ - 1}} + 0.5140{q^{ - 2}}}}{u_2}(k)
\end{array}
\end{equation}
The step response of the process is shown in Fig. \ref{fig: process A};
\begin{figure}
\begin{center}
\includegraphics[width=0.45\textwidth]{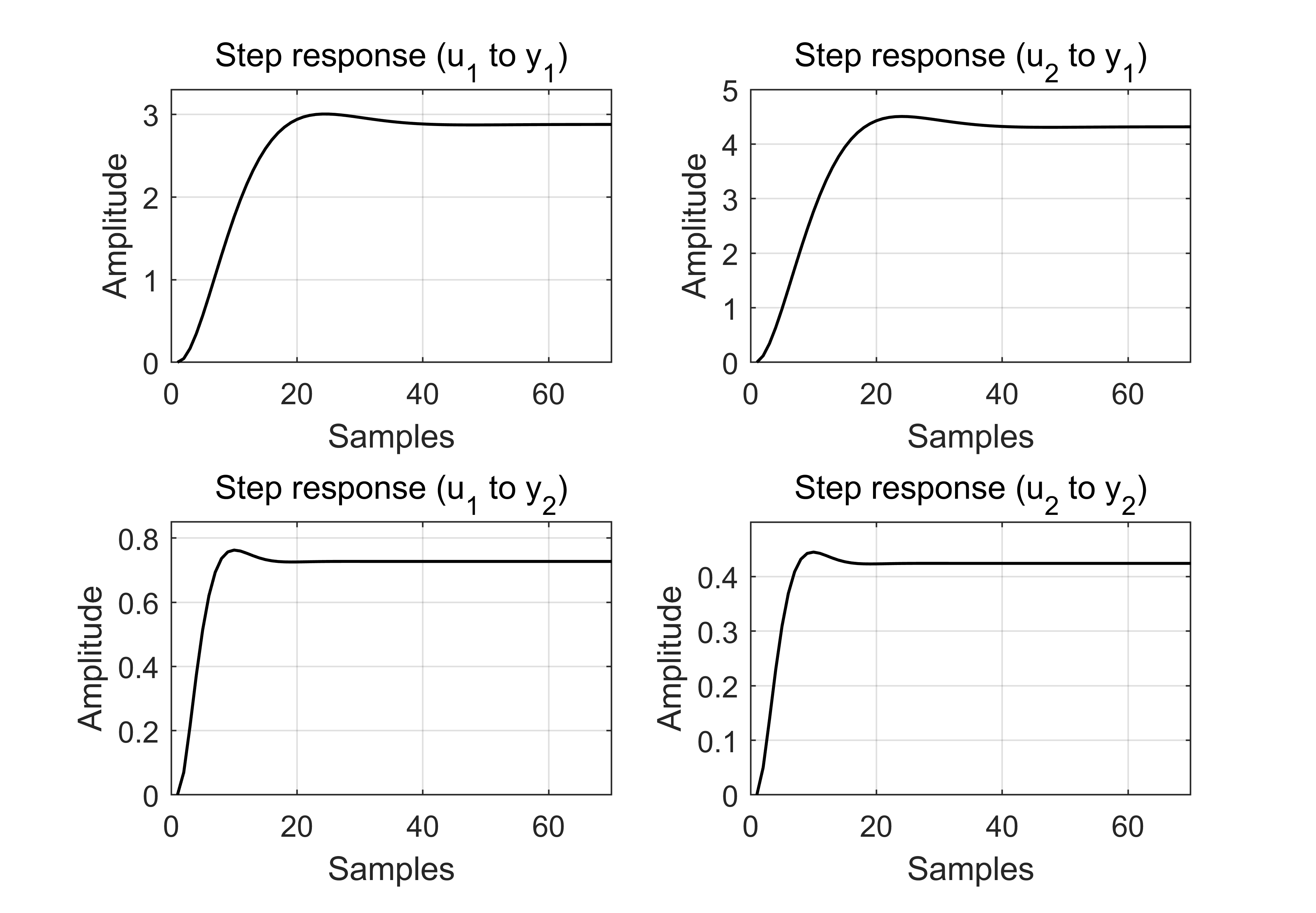}
\caption{Process A}
\label{fig: process A}
\end{center}
\end{figure}

In order to show the relation between the weighting matrices and the equivalent first-order reference curves, a closed-loop step test is performed. The equivalent reference curve $w_i(k)$ is formulated as
\begin{equation}\label{eq58}
{w_i}(k + h) = {r_i}(k) + \left( {{r_i}(k) - {y_i}(k)} \right)\left( {1 - {e^{ - h{T_s}/{\lambda _i}}}} \right)
\end{equation}
where signal $r_i(k)$ denotes the setpoint of the $i^{th}$ output. $h = 1,...,P$. $\lambda_i$ denotes the time constant of $i^{th}$ reference curve. The horizon settings are: horizon of dynamics $N=60$, prediction horizon $P=40$, control horizon $M=10$. The settings of weighting parameters are as follows

$q_1=q_2=50$, $r_1=r_1=1$, $s_1=800$, $s_2=200$;

The parameter of equivalent first order reference curves are as follows

$\lambda_1=4$, $\lambda_2=2$

Perform the closed-loop step test and the result is shown in Fig. \ref{fig4} and Fig. \ref{fig5}. One can see that the obtained closed-loop step responses match the desired reference curve very closely. Note that $y_1$ fits its reference curve better than $y_2$ does because the weighting of $y_1$ is much stronger than that of $y_2$ in the two-term QP due to larger $s_1$ and larger model gains.

\begin{figure}
\begin{center}
\includegraphics[width=0.45\textwidth]{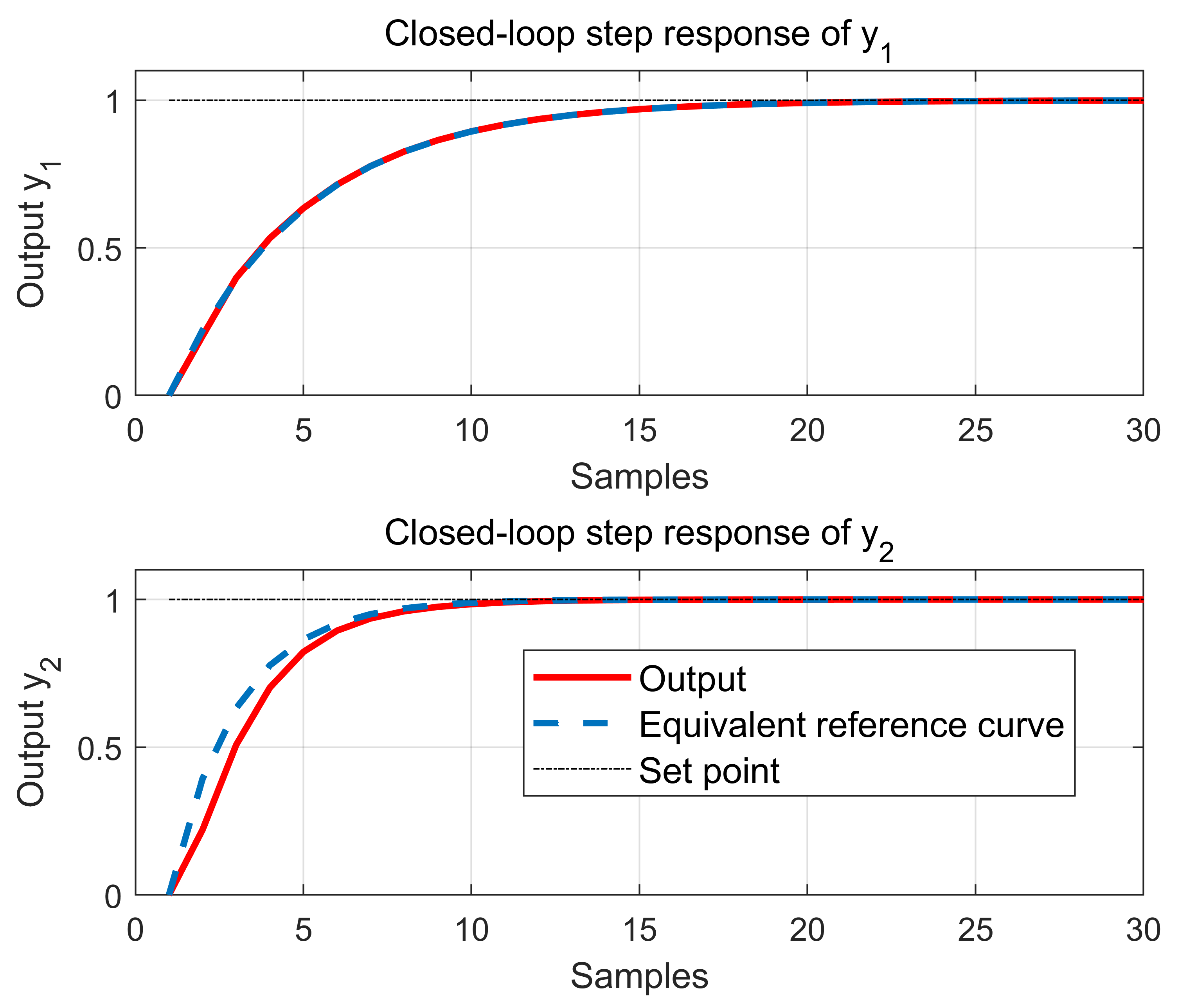}
\caption{Outputs of the process A, step response test}
\label{fig4}
\end{center}
\end{figure}

\begin{figure}
\begin{center}
\includegraphics[width=0.45\textwidth]{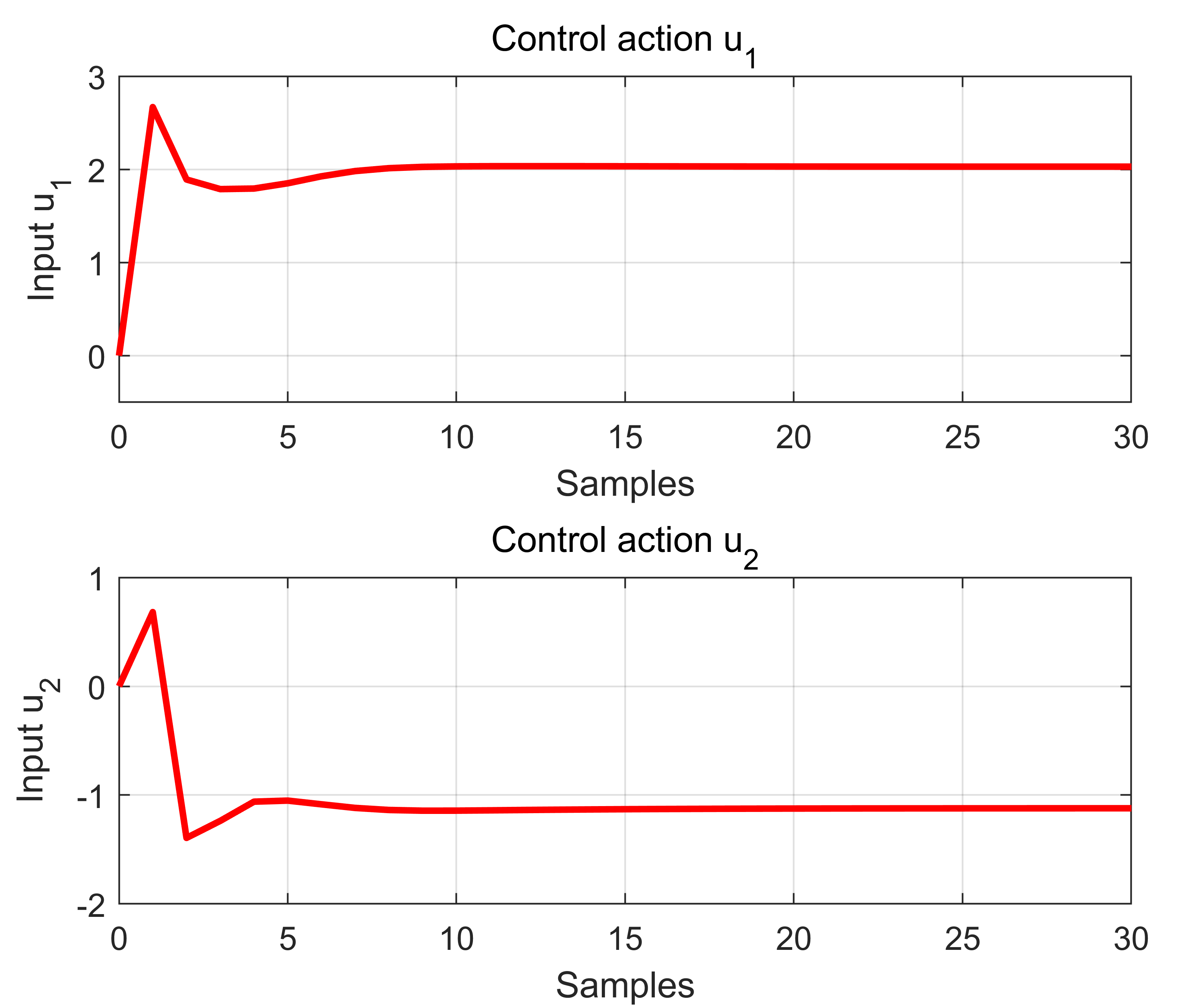}
\caption{Inputs of the process A, step response test}
\label{fig5}
\end{center}
\end{figure}

\subsection{Formula of ideal closed-loop response curves}
In this section, the ideal closed-loop response of a multivariable system controlled by an unconstrained three-term DMC will be derived supposing the weighing matrix $\bm{R} = \bm{0}$. Some assumptions are made before the derivation:
\begin{ass}\label{ass1}
The offsets between references and outputs of the disturbance-free system controlled by the three-term DMC are constant after $P$ steps, i.e.,
\begin{equation}
  w_i^{'}(k+h) - y_i(k+h) = \mathrm{constant}
\end{equation}
for $i = 1,\cdots,p$, $h\geq P$, $h\in\mathbb{N}^+$.
\end{ass}

\begin{ass}\label{ass2}
The model horizon $N$ and prediction horizon $P$ are large enough such that the predicted output $\bm{y}_P(k)$ equals the actual output $\bm{y}(k)$ of the disturbance-free system.
\end{ass}

When $\bm{R} = \bm{0}$, it is not difficult to realize Assumption~\ref{ass1} because there is no restriction for the control action. However, there may exist nontrivial systems which has been excluded by Assumption~\ref{ass1}. Notice also that Assumption~\ref{ass1} has implied the closed-loop stability. Assumption~\ref{ass2} is posed to derive the exact expression of the closed-loop response. Now the following theorem can be proved:

\begin{thm}\label{thm2}
  When $\bm{R} = \bm{0}$ and under Assumptions \ref{ass1} and \ref{ass2}, the closed-loop response of a disturbance-free system controlled by the unconstrained three-term DMC is given by:
  \begin{enumerate}
    \item $d_i = 1$:\begin{equation}\label{est_single}
      y_i(k + h) = {w_i^{'}}(k + h)
      \end{equation}
    \item $d_i > 1$: \begin{equation}\label{est_multi}
      {y_i}(k + h) = \left\{ {\begin{array}{*{20}{c}}
      {0{\quad \quad \quad \quad \quad \quad},h \le {d_i} - 1}\\
      {{w_i^{'}}(k + h) - {w_i^{'}}(k + {d_i} - 1)\alpha _i^{h - d_i + 1}{\rm{   }},h > {d_i} - 1}
      \end{array}} \right.
      \end{equation}
      \begin{equation}
      {\alpha_i} = \frac{{\left( {\frac{q_i}{s_i} + 2} \right) - \sqrt {{{\left( {\frac{q_i}{s_i} + 2} \right)}^2} - 4} }}{2}
      \end{equation}
\end{enumerate}
where $i = 1, \cdots p$, $h = 1, \cdots P$, and $d_i$ is the output delay at $i$th output.
\end{thm}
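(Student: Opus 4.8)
The plan is to push everything through the two-term reformulation of Theorem~\ref{thm1} and then read a second-order difference equation off the tridiagonal weight $\bm{Q}_i'$. Since $\bm{R}=\bm{0}$ and $\bm{Q}'$ is block diagonal, the outputs decouple, so I would fix an output $i$ and work with the scalar prediction error $e_i(k+h):=w_i'(k+h)-y_{P,i}(k+h)$. The first step is purely algebraic: verify the telescoping identity
\[ e_i^\top \bm{Q}_i' e_i = q_i\sum_{h>d_i} e_i(k+h)^2 + s_i\sum_{h\ge d_i}\big(e_i(k+h+1)-e_i(k+h)\big)^2, \]
which one checks directly against the entries $s_i,\;q_i+2s_i,\;q_i+s_i,\;-s_i$ of $\bm{Q}_i'$ and the leading $(d_i-1)$ zeros. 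This recasts the cost as a discrete ``data fit plus smoothness'' functional and makes the Euler--Lagrange structure explicit.

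The second step is the stationarity analysis. Using $\bm{R}=\bm{0}$ together with Assumption~\ref{ass2} (so the realized output may be identified with the optimal prediction over a horizon long enough to ignore truncation), I would argue that the closed-loop error satisfies the componentwise condition $(\bm{Q}_i' e_i)_h=0$ at every weighted interior index. Reading the tridiagonal row $(-s_i,\;q_i+2s_i,\;-s_i)$ gives
\[ e_i(k+h+1)-\big(q_i/s_i+2\big)e_i(k+h)+e_i(k+h-1)=0, \]
whose characteristic polynomial $\alpha^2-(q_i/s_i+2)\alpha+1=0$ has reciprocal roots; the smaller one is exactly the $\alpha_i$ in the statement, so the general error is $e_i(k+h)=C_1\alpha_i^{\,h}+C_2\alpha_i^{-h}$. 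It is worth noting that the reshaped setpoint $w_i'$ of Theorem~\ref{thm1} solves the \emph{same} second-order operator with a setpoint forcing, which is the conceptual reason their difference $e_i$ is annihilated by the homogeneous recursion.

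The third step fixes the two constants from the boundary data, and this is where the cases separate. At the far end, Assumption~\ref{ass1} (a bounded, indeed constant, residual beyond the horizon) together with the large-$P$ clause of Assumption~\ref{ass2} rules out the growing mode, forcing $C_2=0$ and leaving $e_i(k+h)=C_1\alpha_i^{\,h}$; the corner discrepancy $q_i+s_i$ versus $q_i+2s_i$ must be shown to be asymptotically negligible here. At the near end the output delay enters: a move made at time $k$ cannot affect output $i$ before step $k+d_i$, so in a response from rest $y_i(k+h)=0$ for $h\le d_i-1$, giving $e_i(k+d_i-1)=w_i'(k+d_i-1)$; matching the surviving mode yields $C_1=w_i'(k+d_i-1)\alpha_i^{-(d_i-1)}$ and hence $e_i(k+h)=w_i'(k+d_i-1)\alpha_i^{\,h-d_i+1}$, which is precisely Eq.~(\ref{est_multi}). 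When $d_i=1$ there is no uncontrollable block, the whole $\bm{Q}_i'$ acts and is positive definite, so the minimizer is $e_i\equiv 0$, giving Eq.~(\ref{est_single}).

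I expect the main obstacle to be the justification invoked in the second step, namely that the \emph{closed-loop} trajectory produced by the receding horizon---where only the first move is implemented and $\bm{W}'$ is recomputed each sample---obeys the same stationarity recursion as a single open-loop minimizer. The subtlety is that one unconstrained open-loop solve with enough authority would drive $e_i$ to zero on the entire controllable block, so the exponential profile is genuinely a closed-loop, delay-plus-recomputation effect; making this rigorous is exactly what consumes all three hypotheses ($\bm{R}=\bm{0}$ to decouple the error from any input penalty, Assumption~\ref{ass2} to identify realized with predicted output, and Assumption~\ref{ass1} to supply the boundedness that discards the growing mode). Pinning the near-end constant through the delay and controlling the far-end truncation are the remaining delicate points.
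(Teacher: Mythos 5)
Your proposal follows essentially the same route as the paper's Appendix C: reduce to the two-term form of Theorem~\ref{thm1}, set $\bm{R}=\bm{0}$, observe that stationarity forces the tridiagonal row condition $s_ie(k+h-1)-(q_i+2s_i)e(k+h)+s_ie(k+h+1)=0$ on the weighted interior, solve the recurrence via the characteristic roots, discard the growing root using Assumption~\ref{ass1}, and pin the constant from the delay-induced boundary value $e(k+d_i-1)=w_i'(k+d_i-1)$; the $d_i=1$ case collapses to $e\equiv 0$ exactly as in the paper. One correction to your closing paragraph: the exponential error profile is \emph{not} a closed-loop recomputation effect, and a single open-loop solve with full authority would \emph{not} drive $e_i$ to zero on the controllable block. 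Because $\bm{Q}_i'$ couples the frozen components $e(k+1),\dots,e(k+d_i-1)$ to the controllable ones through the off-diagonal $-s_i$ entries, the gradient at the ``zero on the controllable block'' candidate is $-2s_iw_i'(k+d_i-1)\neq 0$ in the first controllable coordinate --- this is precisely the ``counterintuitive result'' the paper verifies by showing $\bm{A}^T\bm{Q}'\bm{e}_P^0\neq\bm{0}$. So the profile already emerges from one QP solve, and Assumption~\ref{ass2} merely identifies the predicted trajectory with the realized closed-loop output; no separate receding-horizon consistency argument is attempted (or needed) in the paper. The realizability of the stationary error profile when $M<P$ is glossed over in both your proposal and the paper, which concedes it only for $P=M$ and appeals to ``acceptable error tolerance'' otherwise.
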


\begin{proof}
  Appendix C: Proof of Theorem \ref{thm2}.
  \end{proof}

Theorem \ref{thm2} gives the formula for calculating the closed-loop responses when the input control move is not punished ($\bm{R}=\bm{0}$). It considers time delays so that the tuning is more accurate than the reference curves which cannot consider time delays. The calculated closed-loop response can be used in the tuning of the three-term DMC.
\subsection{Simulation Example}
Consider a 2-input-2-output system as follows:
\begin{equation}
\begin{array}{l}
{y_1}(k) = \frac{{0.0450{q^{ - 1}} + 0.0450{q^{ - 2}}}}{{1 - 1.7347{q^{ - 1}} + 0.7660{q^{ - 2}}}}{u_1}(k - 1)\\
{\quad \quad \quad \quad \quad \quad \quad \quad \quad} + \frac{{0.1200{q^{ - 1}} + 0.0150{q^{ - 2}}}}{{1 - 1.7347{q^{ - 1}} + 0.7660{q^{ - 2}}}}{u_2}(k - 1)\\
{y_2}(k) = \frac{{0.0700{q^{ - 1}} + 0.0500{q^{ - 2}}}}{{1 - 1.3490{q^{ - 1}} + 0.5140{q^{ - 2}}}}{u_1}(k - 4)\\
{\quad \quad \quad \quad \quad \quad \quad \quad \quad} + \frac{{0.0500{q^{ - 1}} + 0.0200{q^{ - 2}}}}{{1 - 1.3490{q^{ - 1}} + 0.5140{q^{ - 2}}}}{u_2}(k - 4)
\end{array}
\end{equation}
The step response of the process is illustrated in Fig. \ref{fig: process B}, with a delay of two samples for the first output and a delay of four samples for the second output. The three-term DMC is used with the following parameter settings:

\begin{enumerate}
\item Horizon of dynamic $N=55$, prediction horizon $P=45$, and control horizon $M=10$;
\item Weighting factors: $q_1=1$, $s_1=1$, $r_1=0.0001$; $q_2=1$, $s_2=2$, $r_2=0.0001$.
\end{enumerate}

The settings ensure that $\frac{{{s_1}}}{{{q_1}}} = 1$, $\frac{{{s_2}}}{{{q_2}}} = 2$, and that $r_j$ is assigned a very small value to prevent numerical issues.

First, the closed-loop step responses are shown. According to Theorem \ref{thm1} and Corollary \ref{cor1}, the prediction of the output is given by
\begin{equation}
{y_i}(k + h) = \left\{ {\begin{array}{*{20}{c}}
{0{\quad \quad \quad \quad \quad \quad},h \le {d_i} - 1}\\
{{w_i^{'}}(k + h) - {w_i^{'}}(k + {d_i} - 1)\alpha _i^{h - d_i + 1}{\rm{   }},h > {d_i} - 1}
\end{array}} \right.
\end{equation}
Here, the equivalent reference curve $w_i^{'}(k)$ is given by
\begin{equation}
w_i^{'}(k + h) = {r_i}(k) + \left( {{r_i}(k) - {y_i}(k)} \right)\left( {1 - {e^{ - h{T_s}/{\lambda _i}}}} \right)
\end{equation}
where ${\lambda _1} = 1$, ${\lambda _2} = \sqrt 2 $, ${\alpha _1} = 0.382$, and ${\alpha _2} = 0.5$. The set point sequence ${r_i}(k) = 1$. The result is shown in Fig. \ref{steppre}. The prediction of the closed-loop step response is accurate.

Second, the closed-loop ramp responses are shown. According to Theorem \ref{thm1} and Corollary \ref{cor2}, the prediction of the output is given by
\begin{equation}
{y_i}(k + h) = \left\{ {\begin{array}{*{20}{c}}
{0{\quad \quad \quad \quad \quad \quad},h \le {d_i} - 1}\\
{{w_i}(k + h) - {w_i}(k + {d_i} - 1)\alpha _i^{h - d_i + 1}{\rm{   }},h > {d_i} - 1}
\end{array}} \right.
\end{equation}
Here, $w_i(k)$ is the setpoint sequence and ${\alpha _1} = 0.382$, ${\alpha _2} = 0.5$. The result is shown in Fig. \ref{ramppre}. Again, the prediction of the closed-loop ramp response is accurate.

The simulation shows that the closed-loop response formula

\begin{figure}
    \begin{center}
    \includegraphics[width=0.45\textwidth]{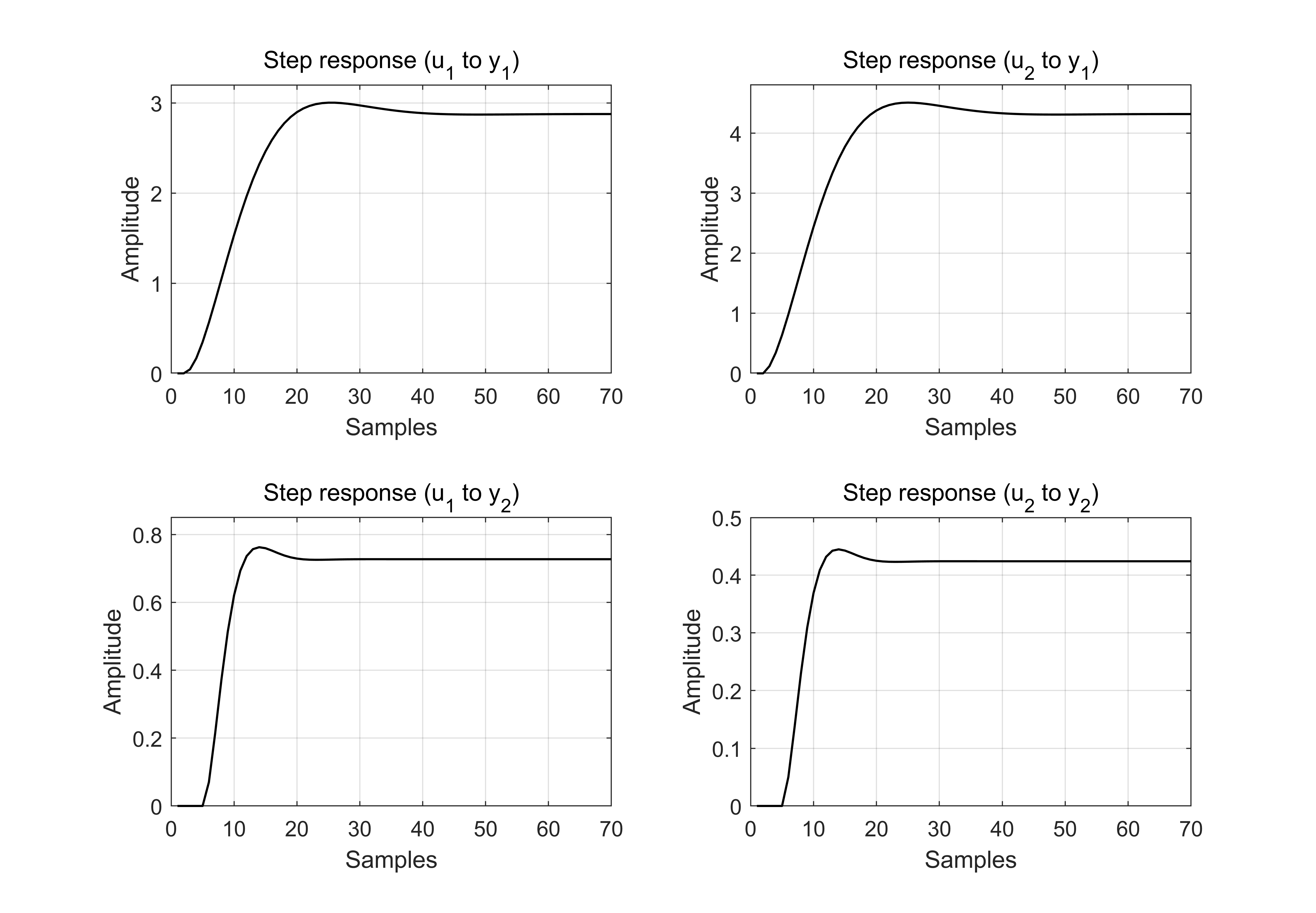}
    \caption{Process B}
    \label{fig: process B}
    \end{center}
\end{figure}
\begin{figure}
    \begin{center}
    \includegraphics[width=0.45\textwidth]{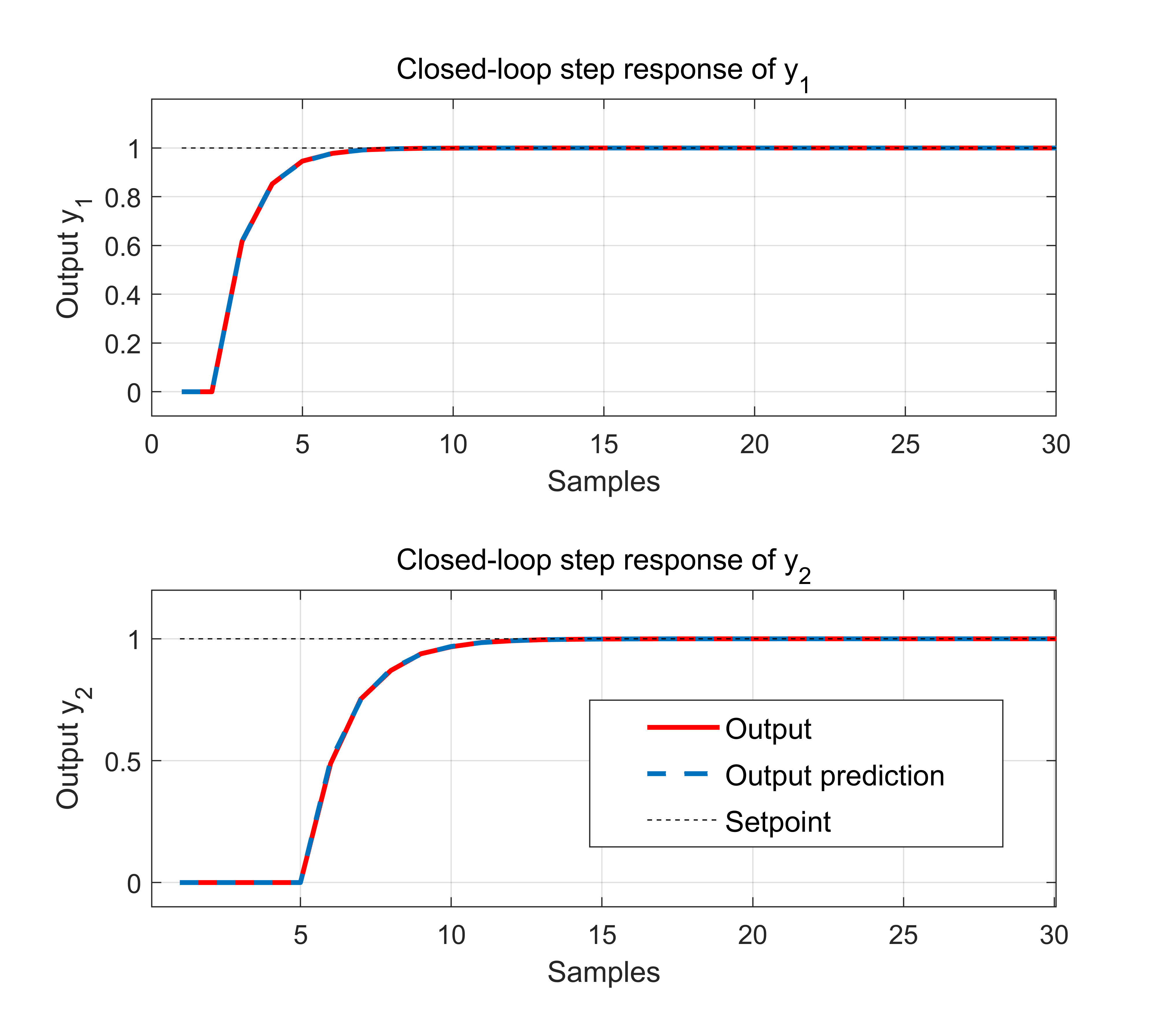}
    \caption{Closed-loop step response of process B}
    \label{steppre}
    \end{center}
\end{figure}

\begin{figure}
    \begin{center}
    \includegraphics[width=0.45\textwidth]{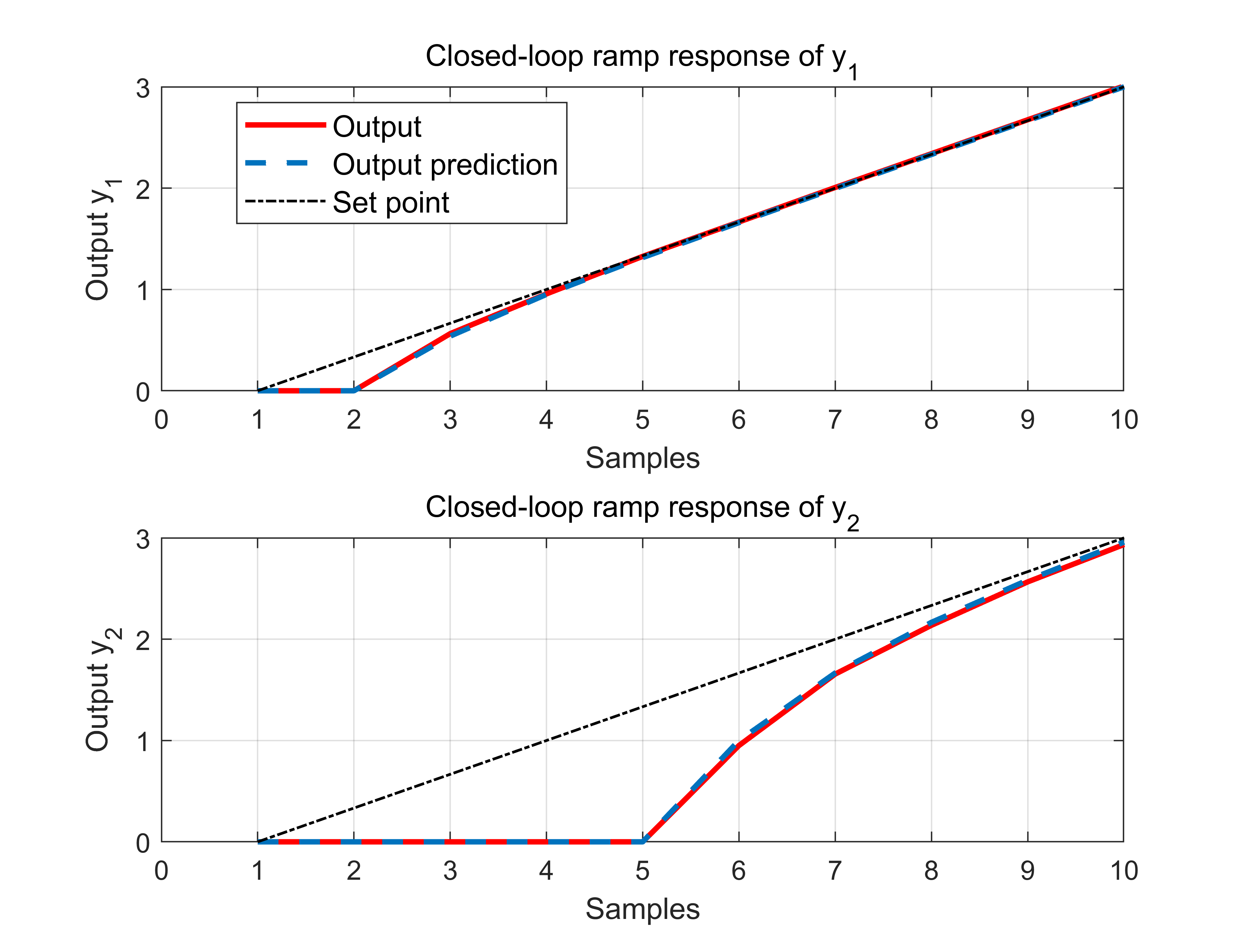}
    \caption{Closed-loop ramp response of process B}
    \label{ramppre}
    \end{center}
\end{figure}

\section{Tuning of the three-term DMC}
In industrial MPC applications, not only high performance but also user-friendliness in controller tuning is highly desired. In large-scale industrial MPC projects, how to determine the weighting matrixes $\bm{Q}$, $\bm{R}$, and  $\bm{S}$ is not straightforward, and optimizing them is even more difficult. A good way to design a controller is to let the user determine the desired closed-loop response. This is called loop-shaping for frequency domain design methods (~\cite{Mcfarlane1992}). In this work, first, a time domain loop-shaping is developed, where the user determines desired closed-loop step responses for process outputs. Then, a tuning method for disturbance reduction is developed.

\subsection{The prediction error method of system identification}\label{si}
To facilitate tuning and control in model predictive control (MPC), a process model is required. The three-term dynamic matrix control (DMC) tuning method utilizes the model to obtain closed-loop responses and perform simulation-based tuning. To achieve this, the prediction error method of system identification is first introduced.

Consider a linear, time-invariant process with $m$ inputs and $p$ outputs. The process's autoregressive moving average with exogenous input (ARMAX) model is expressed as follows:
\begin{equation}\label{system}
{\bm{y}}(k) = {{\bm{G}}_0}({q^{ - 1}}){\bm{u}}(k) + {{\bm{H}}_0}({q^{ - 1}}){\bm{e}}(k)
\end{equation}
with
\begin{equation}\label{mfd1}
{{\bm{G}}_0}({q^{ - 1}}) = {\bm{A}}_0^{ - 1}({q^{ - 1}}){{\bm{B}}_0}({q^{ - 1}})
\end{equation}
\begin{equation}\label{mfd2}
{{\bm{H}}_0}({q^{ - 1}}) = {\bm{A}}_0^{ - 1}({q^{ - 1}}){{\bm{C}}_0}({q^{ - 1}})
\end{equation}
where
\begin{equation}
{{\bm{A}}_0}({q^{ - 1}}) = diag\{ A_0^1({q^{ - 1}}), \cdots ,A_0^p({q^{ - 1}})\}
\end{equation}

\begin{equation}
{{\bm{B}}_0}({q^{ - 1}}) = \left[ {\begin{array}{*{20}{c}}
{B_0^{11}({q^{ - 1}})}& \cdots &{B_0^{1m}({q^{ - 1}})}\\
 \vdots & \ddots & \vdots \\
{B_0^{p1}({q^{ - 1}})}& \cdots &{B_0^{pm}({q^{ - 1}})}
\end{array}} \right]
\end{equation}

\begin{equation}
{{\bm{C}}_0}({q^{ - 1}}) = diag\{ C_0^1({q^{ - 1}}), \cdots ,C_0^p({q^{ - 1}})\}
\end{equation}
where  ${{\bm{A}}_0}({q^{ - 1}})$ and ${{\bm{C}}_0}({q^{ - 1}})$ are $p \times p$ diagonal matrices of polynomials with unit leading coefficients, ${{\bm{B}}_0}({q^{ - 1}})$ is a $p \times m$ matrix of polynomials. Denote ${\bm{y}}(k)$ and ${\bm{u}}(k)$ as the output vector and the input vector at sampling time $k$ and denote the signal ${\bm{e}}(k)$ as the white noise vector.

\begin{equation}
{\bm{y}}(k) = {\left[ {\begin{array}{*{20}{c}}
{{y_1}(k)}&{...}&{{y_p}(k)}
\end{array}} \right]^T}
\end{equation}
\begin{equation}
{\bm{u}}(k) = {\left[ {\begin{array}{*{20}{c}}
{{u_1}(k)}&{...}&{{u_m}(k)}
\end{array}} \right]^T}
\end{equation}
\begin{equation}
{\bm{e}}(k) = {\left[ {\begin{array}{*{20}{c}}
{{e_1}(k)}&{...}&{{e_p}(k)}
\end{array}} \right]^T}
\end{equation}
where $y_i(k)$ and $u_j(k)$ are the $i$th output and the $j^{th}$ input. The signal $e_i(k)$ is the white noise of $i^{th}$ channel with variance $\sigma _{{e_i}}^2 > 0$.
When  ${{\bm{A}}_0}({q^{ - 1}})$ and ${{\bm{C}}_0}({q^{ - 1}})$ are diagonal matrices, the ARMAX model Eq. (\ref{mfd1}-\ref{mfd2}) is  called diagonal form matrix fraction description of the process Eq. (\ref{system}). This is the simplest form for describing multi-input multi-output (MIMO) processes. In this form, the MIMO process Eq. (\ref{system}) is decomposed into $p$ subsystems,
\begin{equation}
{y_i}(k) = \frac{1}{{A_0^i({q^{ - 1}})}}\sum\limits_{j = 1}^m {B_0^{ij}({q^{ - 1}}){u_j}(k)}  + \frac{{C_0^i({q^{ - 1}})}}{{A_0^i({q^{ - 1}})}}{e_i}(k)
\end{equation}
For each output, the one-step-ahead prediction error is defined as
\begin{equation}
{\varepsilon _i}(k) = \frac{{{A^i}({q^{ - 1}})}}{{{C^i}({q^{ - 1}})}}\left[ {{y_i}(k) - \sum\limits_{j = 1}^m {\frac{{{B^{ij}}({q^{ - 1}})}}{{{A^i}({q^{ - 1}})}}{u_j}(k)} } \right]
\end{equation}
where ${{A^i}({q^{ - 1}})}$, ${{B^{ij}}({q^{ - 1}})}$, ${{C^i}({q^{ - 1}})}$ are polynomials.

Let the model parameters of each MISO subsystem be collected in ${\theta ^i}$. Using a set of input-output data of length $N$, the estimate $\hat \theta _N^i$ is calculated by minimizing the prediction error loss function:
\begin{equation}
\hat \theta _N^i = \arg \mathop {\min }\limits_\theta  \frac{1}{N}\sum\limits_{k = 1}^N {\varepsilon _i^T(k|\theta ){\varepsilon _i}(k|\theta )}
\end{equation}
Then, one obtains the model of the process
\begin{equation}\label{getmodel}
{{\hat y}_i}(k) = \frac{1}{{{A^i}({q^{ - 1}})}}\sum\limits_{j = 1}^m {{B^{ij}}({q^{ - 1}}){u_j}(k)}
\end{equation}
The unmeasured disturbance $v_i(k)$ can be estimated as
\begin{equation}\label{getv}
{v_i}(k) = {y_i}(k) - {{\hat y}_i}(k)
\end{equation}
Eq. (\ref{getmodel}) and Eq. (\ref{getv}) can be used in the simulation-based tuning procedure.
\subsection{Tuning for closed-loop step responses}
Similar to frequency domain loop-shaping, one can use time domain loop-shaping by setting closed-loop step responses for each output. In Theorem \ref{thm2}, one knows how to determine the ratios of weighting matrices $\bf{S}$ and $\bf{Q}$ in order to obtain desired closed-loop settling times of outputs. The remaining question is how to determine the weighting matrices $\bf{Q}$ and $\bf{R}$.

Before the tuning procedure, the input and output variables are normalized
\begin{equation}\label{eq45}
{q_i} = \frac{{{k_y}}}{{{{\left( {{y_{i,\max }} - {y_{i,\min }}} \right)}^2}}}
\end{equation}
\begin{equation}\label{eq46}
{r_j} = \frac{{{k_u}}}{{{{\left( {{u_{j,\max }} - {u_{j,\min }}} \right)}^2}}}
\end{equation}
where ${{y_{i,\max }}}$ and ${{y_{i,\min }}}$ denote the upper limit and the lower limit of the $i^{th}$ output, and ${{u_{j,\max }}}$ and ${{u_{j,\min }}}$ denote the upper limit and the lower limit of the $j^{th}$ input. ${{k_y}}$ and ${{k_u}}$ are scalars. Let ${k_{yu}}={k_y}/{k_u}$, which determines the closed-loop bandwidth.

In general, fixing the input weighting matrix $\bf{R}$ and increasing the output weighting matrix $\bf{Q}$ will increase the tracking performance, but will reduce the robustness to model uncertainty and may cause overshoot and oscillation. In industrial applications, it is desirable that:

\begin{enumerate}
    \item The outputs follow their set point trajectories closely;
    \item The control actions do not move too wildly.
\end{enumerate}

The second requirement is for smooth control actions and robustness of the controlled system. Observations from many industrial applications and simulations show that the control action wildness (or smoothness) are related to the overshoots of input movements during a closed-loop step test where all set points are changed by one unit simultaneously. The sizes of input overshoots in a closed-loop step test can be used to measure the control action wildness: the larger the overshoots, the wilder the control action and lower the robustness. The user can set upper bounds for input overshoots when tuning the controller.

The tuning for the closed-loop step response is summarized as follow:

%
%
%
%
%
\begin{table}[]
\caption{Tuning procedure of closed-loop step response}
\begin{center}
\begin{tabular}{|c|l|}
\hline
\multicolumn{1}{|l|}{} & \multicolumn{1}{c|}{Procedure}                                                                                                                                                                                                                          \\ \hline
Step 1 & \begin{tabular}[c]{@{}l@{}}Normalize the input and output variables \\ according to Eq. (\ref{eq45}) and Eq. (\ref{eq46});\end{tabular}             \\ \hline
Step 2 & \begin{tabular}[c]{@{}l@{}}Specify closed-loop time constants for \\ each output and upper bounds of input \\ control action overshot;\end{tabular}                                 \\ \hline
Step 3 & \begin{tabular}[c]{@{}l@{}}Determine the ratio of $\bf{S}$ and $\bf{Q}$ according \\ to closed-loop step response estimation \\ (Theorem \ref{thm2});\end{tabular} \\ \hline
Step 4 & \begin{tabular}[c]{@{}l@{}}Start closed-loop step test simulation from \\ a small ratio $k_{yu}$;\end{tabular}                                                                      \\ \hline
Step 5 & \begin{tabular}[c]{@{}l@{}}Increase the $k_{yu}$ in closed-loop step simu-\\ lations until some control action over-shoots \\ reach their overshoot upper bounds.\end{tabular}      \\ \hline
Step 6 & \begin{tabular}[c]{@{}l@{}}Fine tune the elements $q_i$ in closed-\\ loop step test until most control action \\ overshoots reach their bounds.\end{tabular}                        \\ \hline
\end{tabular}
\end{center}
\label{table1}
\end{table}

For closed-loop response times, set:
\begin{enumerate}
    \item slow controller, set them around 0.8 times their open loop response times;
    \item medium fast controller, set them around 0.4 times the open loop response times;
    \item fast controller, set them around 0.2 times the open loop response times.
\end{enumerate}

Note that time delays are not included in the above calculation.

For upper bounds of control action overshoots, set:
\begin{enumerate}
    \item no overshoot for smooth control;
    \item 50\% overshoot for medium strong control action;
    \item 100\% overshoot for strong control action.
\end{enumerate}

The two settings are related. For example, reducing the closed-loop response times can increase control action overshoots. One needs to consider model accuracy when setting the two sets of parameters. If the process model is very accurate, short closed-loop response times and large control action overshoots could be used; if the model is not accurate, larger response times and smaller overshoots should be chosen.

The time-domain loop shaping tuning is by no means complete that can cover all situations, but it is simple, intuitive and effective for process industry applications. The parameter settings given above are not very strict in applications and the user can adjust them in their situations.

\subsection{Tuning for optimal disturbance reduction}
A more important control performance in process industries is disturbance reduction rather than setpoint tracking. The time-domain loop shaping tuning can be used for disturbance reduction in a indirect manner. However, a tuning method that directly optimizes disturbance reduction could be more desirable.

Optimal disturbance (noise) reduction means that, for a given class of unmeasured disturbances, the controller parameters are tuned to minimize output variances under some constraints in robustness. As mentioned in subsection \ref{si}, unmeasured disturbances at process outputs can be estimated in the system identification as in Eq. (\ref{getv}). Here again, the upper bounds of input control action overshoots in closed-loop step tests are used as constraints.

Before the tuning procedure, the input and output variables are normalized like Eq. (\ref{eq45}) and Eq. (\ref{eq46}). Then the tuning procedure can be formulated as an optimization problem:
\begin{equation}\label{optimization}
\begin{array}{l}
\mathop {\min }\limits_{{k_{yu}}} {I_\sigma } = \frac{1}{p}\sum\limits_{i = 1}^p {\frac{{{\sigma _{{y_i}}}}}{{\left| {{y_{i,\max }} - {y_{i,\min }}} \right|}}} \\
s.t.\left| {u_j^{os}} \right| \le \left| {u_j^{bound}} \right|
\end{array}
\end{equation}
In the given expression, ${I_\sigma}$ represents the standard deviation of the $i^{th}$ output, while ${u_j^{os}}$ denotes the overshoot of the control action $u_j$ in the closed-loop step test. Moreover, ${u_j^{bound}}$ represents the upper bound of the overshoot for the $j^{th}$ input in the closed-loop step test. It is important to note that each input is associated with a specific bound.

Based on the industrial experience of the authors, the following tuning method for optimal disturbance reduction is proposed.

%
%
%
%
%
%
%
\begin{table}[]
\caption{Tuning procedure of disturbance reduction}
\begin{center}
\begin{tabular}{|c|l|}
\hline
\multicolumn{1}{|l|}{} & \multicolumn{1}{c|}{Procedure}                                                                                                                                                                                                                          \\ \hline
Step 1                 & \begin{tabular}[c]{@{}l@{}}Normalize the input and output variables \\ according to Eq. (\ref{eq45}) and Eq. (\ref{eq46});\end{tabular}                                                                                    \\ \hline
Step 2                 & \begin{tabular}[c]{@{}l@{}}Set closed-loop step response times for \\ each output;\end{tabular}                                                                                                                                                      \\ \hline
Step 3                 & \begin{tabular}[c]{@{}l@{}}Set upper bounds of input control action \\ overshoots in closed-loop step test;\end{tabular}                                                                                                                             \\ \hline
Step 4                 & \begin{tabular}[c]{@{}l@{}}Determine the ratio of $\bf{S}$ and $\bf{Q}$ according \\ to closed-loop step response estimation \\ (Theorem \ref{thm2});\end{tabular}                                                                  \\ \hline
Step 5                 & \begin{tabular}[c]{@{}l@{}}Run the simulation with increasing $k_{yu}$ \\ until some inputs reach their overshoot \\ upper bounds in the closed-loop step test;\end{tabular}                                                                         \\ \hline
Step 6                 & \begin{tabular}[c]{@{}l@{}}Run the simulation of disturbance reduction \\ with the same $k_{yu}$ as in step 4. Calculate \\ the performance index ${I_\sigma }$;\end{tabular}                                                          \\ \hline
Step 7                 & \begin{tabular}[c]{@{}l@{}}Plot the relation curve of ${I_\sigma }$ and $k_{yu}$; Select \\ $k_{yu}$ which leads to minimum value of ${I_\sigma }$. \\ Calculate the weighting factor: $q_i$, $r_i$, $s_i$;\end{tabular} \\ \hline
Step 8                 & \begin{tabular}[c]{@{}l@{}}Fine-tune the elements $q_i$ in the closed-loop \\ step test until most control action overshoots \\ reach their bounds.\end{tabular}                                                                                     \\ \hline
\end{tabular}
\end{center}
\label{table2}
\end{table}

\begin{mark_}
The tuning for optimal disturbance reduction differs from the tuning for closed-loop step response in two aspects: (1) In the tuning for optimal disturbance reduction, a small ratio of $\frac{{{s_i}}}{{{q_i}}}$ will be given in order to react on disturbance quickly; in the tuning for step response, larger ratios of $\frac{{{s_i}}}{{{q_i}}}$ might be given for high robustness; (2) In the tuning for optimal disturbance reduction, the estimates of unmeasured disturbances are used in simulations; in the tuning for step response, the unmeasured disturbances are not considered.
\end{mark_}

\section{Performance and robustness comparison}
In this section, the comparison of the three-term DMC and the two-term DMC will be carried out both theoretically and through simulations, focusing on control performance and robustness.
\subsection{An approach to comparing two control methods}
To provide a more tangible demonstration of the advantages of the three-term DMC over the two-term DMC, a novel method for comparing controller performance and robustness will be proposed. The method involves designing controllers using Method A (referring to the three-term DMC) and controllers using Method B (referring to the two-term DMC). Using a given process model, closed-loop simulations will be performed to evaluate the controllers' performance and robustness. Specifically, step response and disturbance reduction simulations will be conducted while varying the controllers' parameters for both methods. The performance and robustness of each method will be evaluated to determine which one performs better. To facilitate the comparison, control error and control action will be defined and measured in each simulation.

The closed-loop control system is shown in Fig. \ref{fig: thm3}. Let $G$ and $\hat G$ denote the true plant and the plant model, respectively, with $\Delta G = G - \hat G$ representing the additive model error. $C$ represents the equivalent linear controller obtained via MPC solution Eq. (\ref{eq15}) or Eq. (\ref{eq29}), $w(k)$ denotes the vector of setpoints, and $u(k)$ and $y(k)$ denote the input and output vectors of the plant, respectively. The unmeasured disturbance is denoted by $v(k)$. To evaluate the abilities of the two-term DMC and the three-term DMC in disturbance rejection and robustness, one can adopt the standard deviation of input and output signals, as introduced in the previous section. One can compare the performance of these control algorithms under the same disturbance scenario.
\begin{figure}
\begin{center}
\includegraphics[width=0.45\textwidth]{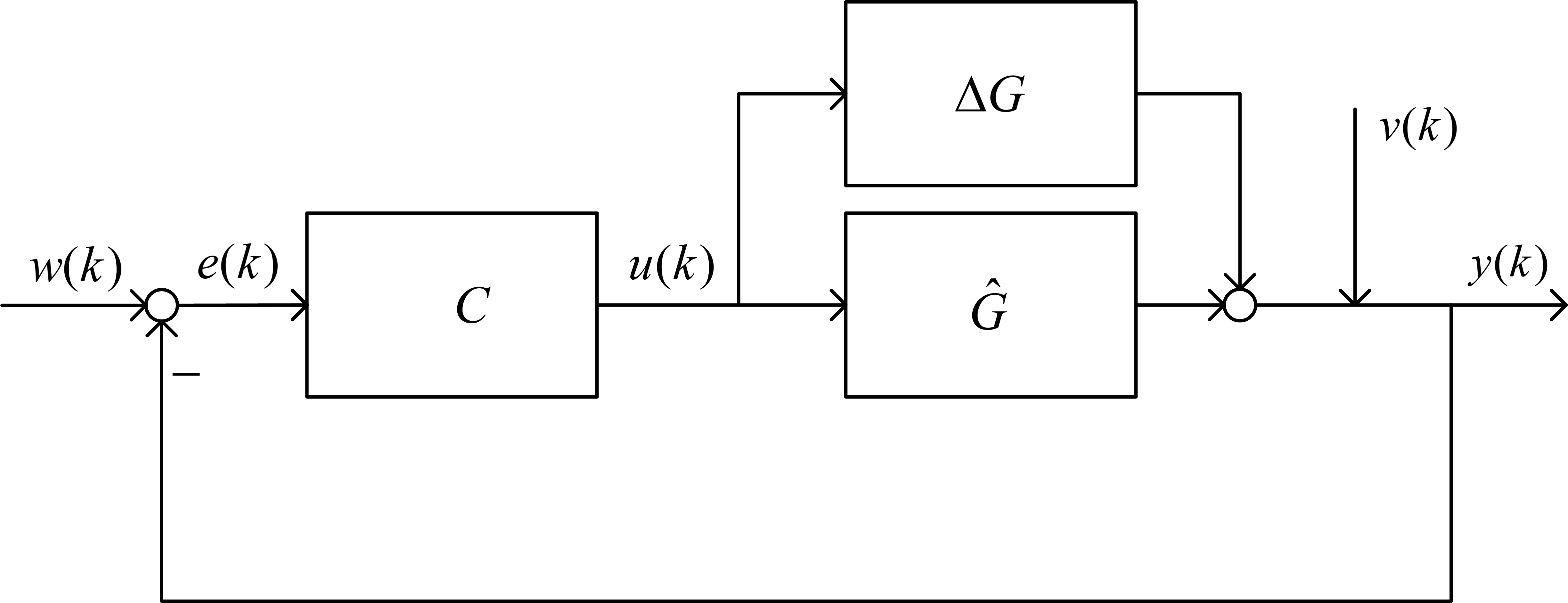}
\caption{Uncertainty description involving additive perturbations}
\label{fig: thm3}
\end{center}
\end{figure}
\begin{definition}
(Control error and control action). Perform a closed-loop control simulation with zero initial condition, the control error is a norm of $w(k)-y(k)$ during the simulation; the control action is a norm of control signal $u(k)$ during the simulation.
\end{definition}
Typical simulations are step response simulation, and disturbance reduction simulation using estimates of unmeasured disturbances obtained from system identification. Typical norm used is the $L_2$ norm.

In the literature, some researchers have compared the control performance of different methods solely based on their control errors, which can be inadequate for an accurate assessment. This is because it is possible for a controller using Method A to have a smaller control error than a controller using Method B, but also have a larger control action. In such a case, Method B may be tuned faster to achieve a smaller control error with a smaller control action than Method A. To address this issue, we propose a procedure for comparing the control performance and robustness of different methods, which considers both control errors and control actions.

\begin{definition}
(Control method comparison). Given control Method A and Method B. Perform closed-loop step response simulations or disturbance reduction simulations with varying parameters. Then Method A is said to be higher performing and more robust against model errors than Method B if

(1) (in terms of performance) using the same control action, the controller of Method A has a smaller control error than Controller B has;

(2) or, equivalently, (in terms of robustness) for the same control error, the controller of Method A uses smaller control action than the controller of Method B does;

(3) or, equivalently, (in terms of performance and robustness) with smaller control error, the controller of Method A uses smaller control action than the controller of Method B does.

\end{definition}

This definition considers both control error and control action, making it possible to understand the performance and robustness comprehensively. It is easy to understand that a smaller control error indicates higher performance. As shown in Fig. \ref{fig: thm3}, a smaller control action implies that the norm of the closed-loop transfer function matrix from the setpoint $w(t)$ or disturbance $v(t)$ is smaller, which, according robust control theory, will make the closed-loop system more robustly stable against additive model errors, see, e.g., \cite{Morari1989}. Assume that the closed-loop system is not close to the boundary of stability. Then, it is not difficult to show that the three comparing options are equivalent.

Consider the closed-loop system depicted in Fig. \ref{fig: thm3}, and find the optimal stabilizing controllers using two distinct methods, which minimize the control errors given by:
\begin{equation}
{J_e} = \sum\limits_{i = 1}^p {\frac{{{\sigma _{{y_i}}}}}{{\left| {{y_{i,\max }} - {y_{i,\min }}} \right|}}}
\end{equation}
subject to the constraint that the control actions of both controllers are equal, expressed as:
\begin{equation}
{J_u} = \sum\limits_{j = 1}^m {\frac{{{\sigma _{{u_j}}}}}{{\left| {{u_{j,\max }} - {u_{j,\min }}} \right|}}}
\end{equation}
Here, $\sigma_{y_i}$ and $\sigma_{u_j}$ denote the standard deviation of the $i^{th}$ output and $j^{th}$ input, respectively. The performance index for comparing the two controllers concerning disturbance reduction is defined in option (1) of \textbf{Definition 2} in the control method comparison as follows:
\begin{equation}\label{eq33}
J_w = \sum_{i=1}^p \frac{\sigma_{y_i}}{\left| y_{i,\max} - y_{i,\min} \right|} + J_u
\end{equation}
Notice that $J_w$ depends on the controller structure and the tunable parameters therein, that is, $J_w=J_w(C(\vartheta))$ where $C$ is some controller and $\vartheta$ the related parameters.

In robust control theory, all stabilizing controllers can be expressed using the so-called Youla parameterization (\cite{Vidyasagar2011}), which characterizes a set of controllers. Consequently, an optimal controller that minimizes a specific performance index can be identified by searching within this set. This concept is useful for comparing two control methods. Given a plant $G$ and its model $\hat G$, let $\mathcal{D}_{2{\rm{term}}}(G,\hat G)$ represent the set of all stabilizing controllers employing the two-term DMC (Eq. (\ref{eq23})), and $\mathcal{D}_{3{\rm{term}}}(G,\hat G)$ denote the set of all stabilizing controller employing the three-term DMC (Eq. (\ref{eq33})). Defining the controllers of the two-term and three-term DMC as $C_{{\rm{2term}}}(\vartheta_2)$ and $C_{{\rm{3term}}}(\vartheta_3)$, respectively, the following theorem holds:

\begin{thm}\label{thm3}
Given a plant $G$ and its model $\hat G$, the three-term DMC controller outperforms the two-term DMC controller in the sense of $J_w$:
\begin{equation}\label{eq34}
  \min_{C_{{\rm{3term}}}(\vartheta_3)\in\mathcal{D}_{3{\rm{term}}}} \left\{J_w\left(C_{{\rm{3term}}}(\vartheta_3)\right)\right\} \leq \min_{C_{{\rm{2term}}}(\vartheta_2)\in\mathcal{D}_{2{\rm{term}}}} \left\{J_w\left(C_{{\rm{2term}}}(\vartheta_2)\right)\right\}.
  \end{equation}
\end{thm}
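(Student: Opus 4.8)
The plan is to exploit the fact that the two-term DMC is precisely the $\bm{S}=\bm{0}$ specialization of the three-term DMC, so that every closed-loop controller realizable by the two-term scheme is also realizable by the three-term scheme. Once this containment is established, the inequality in Eq.~(\ref{eq34}) follows from the elementary principle that minimizing a fixed objective over a larger feasible set cannot increase the optimal value.

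First I would verify the reduction at the level of the control law. Setting $\bm{S}=\bm{0}$ in the three-term gain formula Eq.~(\ref{eq29}) collapses the inverse factor to $(\bm{A}^T\bm{Q}\bm{A}+\bm{R})^{-1}$ and annihilates every term carrying a factor of $\bm{S}$, leaving exactly the two-term gain of Eq.~(\ref{eq15}). Hence for every admissible parameter tuple $\vartheta_2$ of the two-term controller there is a three-term tuple $\vartheta_3$ (namely the same $\bm{Q},\bm{R}$ together with $\bm{S}=\bm{0}$, which is admissible since $s_i\in[0,\infty)$) for which $C_{{\rm{3term}}}(\vartheta_3)$ and $C_{{\rm{2term}}}(\vartheta_2)$ coincide as equivalent linear controllers, and therefore act identically on the plant $G$ in the loop of Fig.~\ref{fig: thm3}.

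Second I would promote this pointwise identity to a set inclusion. Because stabilization of $G$ is a property of the realized controller and the plant alone, and $C_{{\rm{3term}}}(\vartheta_3)=C_{{\rm{2term}}}(\vartheta_2)$ is literally the same controller, any two-term controller that stabilizes $G$ is a stabilizing three-term controller as well; thus $\mathcal{D}_{2{\rm{term}}}(G,\hat{G})\subseteq\mathcal{D}_{3{\rm{term}}}(G,\hat{G})$. Moreover $J_w$ in Eq.~(\ref{eq33}) is a functional of the realized controller $C$ through the resulting closed-loop signal statistics $\sigma_{y_i}$ and $\sigma_{u_j}$, so its value on a shared controller is the same whether that controller is labeled two-term or three-term. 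Taking the infimum of this common $J_w$ over the smaller set $\mathcal{D}_{2{\rm{term}}}$ and over the larger set $\mathcal{D}_{3{\rm{term}}}$ then yields Eq.~(\ref{eq34}) directly.

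The part requiring the most care is the stabilizing qualification and its interplay with the Youla parameterization invoked in the surrounding text. One must argue that the inclusion is preserved at the level of \emph{stabilizing} controllers and not merely at the level of formal gain matrices; this is clean here only because the two schemes produce genuinely identical linear controllers at $\bm{S}=\bm{0}$, so no separate stability verification is needed for the embedded copies. A secondary subtlety, flagged just before the theorem, is that the minimizers are assumed to exist and to lie away from the stability boundary, so that the infima are attained and the comparison is between genuine optima rather than limiting values.
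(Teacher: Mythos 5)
Your proposal is correct and follows essentially the same route as the paper's own proof: embed the two-term controller into the three-term family via $\bm{S}=\bm{0}$, conclude $\mathcal{D}_{2{\rm{term}}}\subseteq\mathcal{D}_{3{\rm{term}}}$, and invoke the principle that minimizing over a larger set cannot increase the optimum. Your version is in fact more careful than the paper's, since you explicitly verify the collapse of Eq.~(\ref{eq29}) to Eq.~(\ref{eq15}) and address the preservation of the stabilizing qualification, which the paper asserts without elaboration.
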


\begin{proof}
  Notice that $\dim\left(\vartheta_3\right) > \dim\left(\vartheta_2\right)$ and every stabilizing controller in the two-term DMC can be reproduced in the three-term QP by setting $\bm{S}$ as a zero matrix. This means that ${\mathcal{D}_{{\rm{3term}}}}(G,\hat G) \supset {\mathcal{D}_{2{\rm{term}}}}(G,\hat G)$. The three-term DMC searches for the optimal controller in a larger set than the two-term DMC does, therefore the inequality Eq. (\ref{eq34}) holds.
\end{proof}

\subsection{Simulation study: performance comparison without constraints}
The performance of two-term DMC and three-term DMC will be compared using a 2-input 2-output system. There are delays of 10 samples in the transfer function of the first output and delays of 2 samples in the second output. These delays will enlarge the difference between the two controller methods. Fig. \ref{fig: process C} shows the step response of the process. Unmeasured disturbances, $v_1(k)$ and $v_2(k)$, are present at the outputs and can be expressed as:

\begin{equation}
\begin{gathered}
v_i(k) = \frac{1 + 0.23q^{-1}}{1 - 0.9q^{-1}} e_i(k)
\end{gathered}
\end{equation}
The disturbances $v_1(k)$ and $v_2(k)$ are mutually independent white noises, $e_1(k)$ and $e_2(k)$, with zero means and a variance of 0.01. The performance index is determined based on the standard deviation of both the output and the control action. Identical horizon settings are employed for the two-term DMC and three-term DMC cases: horizon of dynamics $N = 55$, prediction horizon $P = 45$, and control horizon $M = 10$. The weighting parameters for each case are as follows:

\textbf{Case 1: Two-term DMC}
\newline $r_i=1$, $q_1$ and $q_2$ vary simultaneously from 0.01 to 1000, $\lambda_1=2$, $\lambda_2=1$;

\textbf{Case 2: Three-term DMC}
\newline $r_i=1$, $q_1$ and $q_2$ vary simultaneously from 0.01 to 1000, $s_1=4q_1$, $s_2=q_2$;

The performance and robustness of the two-term DMC and three-term DMC are compared. The weighting values are set using a rule to achieve the same closed-loop response time in both cases. Closed-loop simulations are performed to evaluate the ability of disturbance rejection. The standard deviations of the output and input are shown in Fig. \ref{fig: comparison1}, which demonstrates that the three-term DMC achieves smaller control errors and smaller control actions than the two-term DMC does. To demonstrate the robustness of the three-term DMC, simulations are carried out with model mismatch. Fig. \ref{fig: comparison2} shows that the closed-loop system becomes unstable for the two-term DMC, while the three-term DMC remains stable. Fig. \ref{fig: comparison3} shows that the two methods have nearly the same control error, but the three-term DMC has a smaller control action when the gain of the process decreases. These results indicate that the three-term DMC achieves higher control performance than the two-term DMC does and has a larger tolerance for model mismatch.
\begin{figure}
\begin{center}
\includegraphics[width=0.45\textwidth]{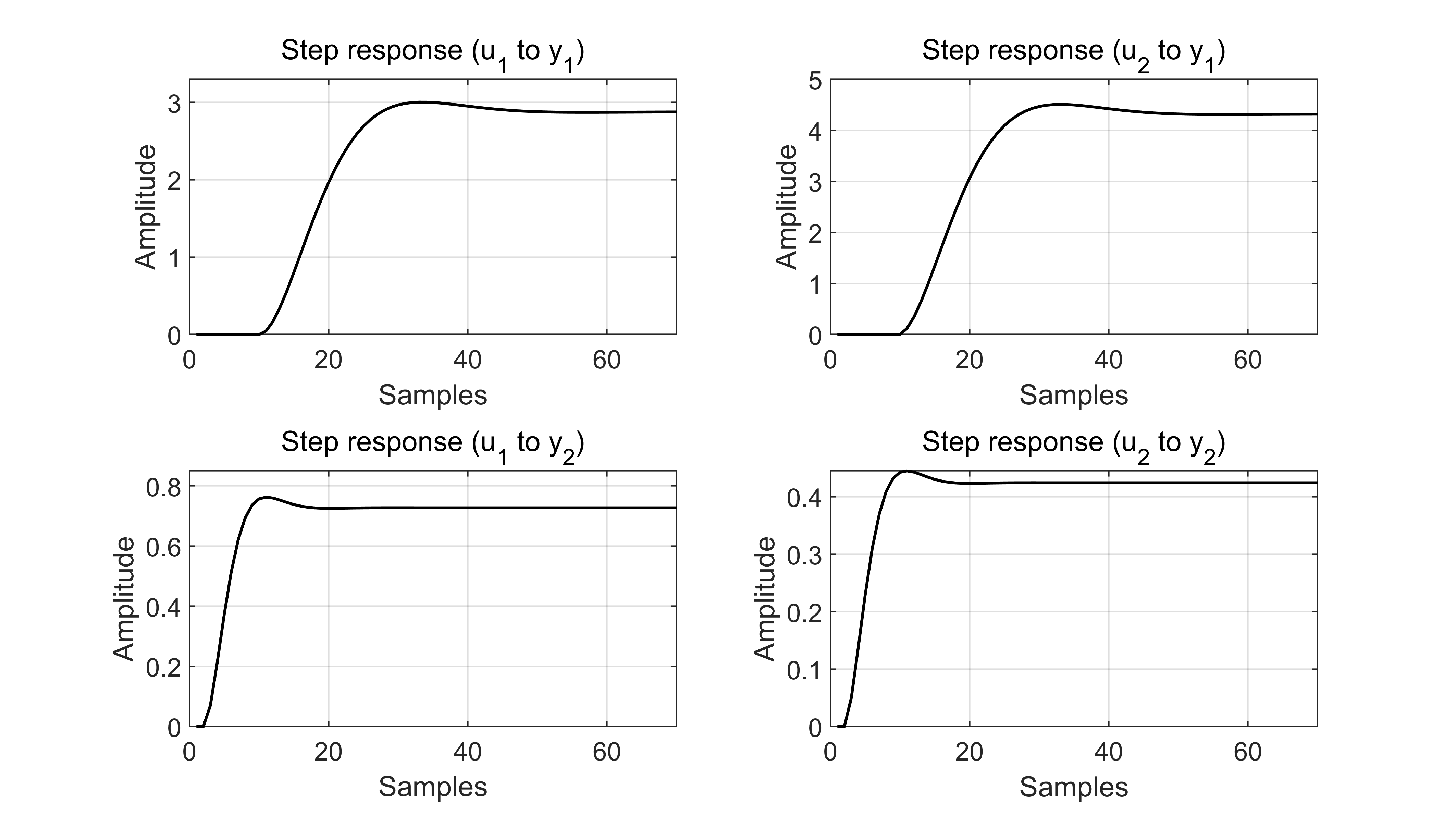}
\caption{Process C}
\label{fig: process C}
\end{center}
\end{figure}
\begin{figure}
\begin{center}
\includegraphics[width=0.45\textwidth]{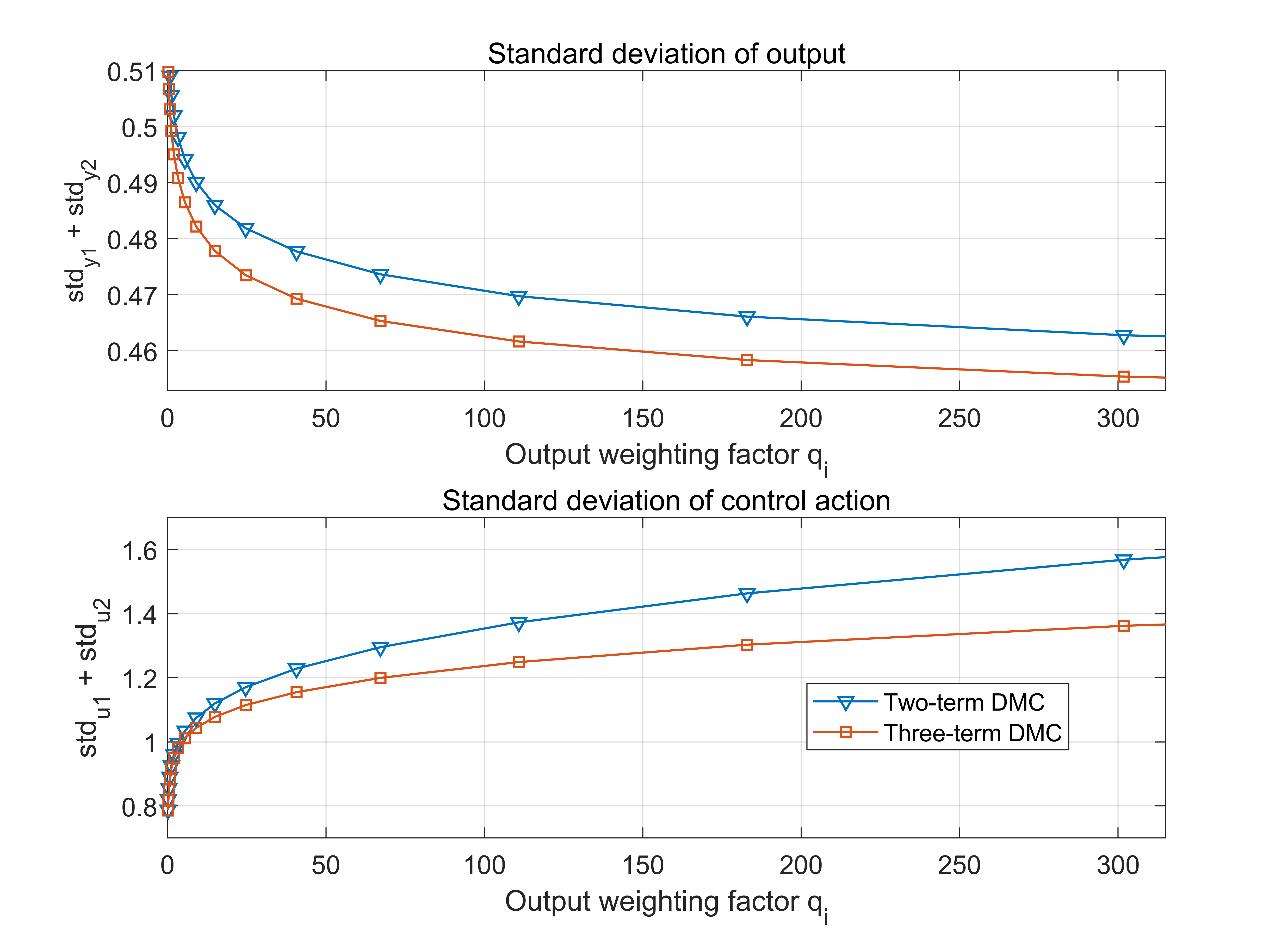}
\caption{Performance comparison of process C, no model mismatch}
\label{fig: comparison1}
\end{center}
\end{figure}
\begin{figure}
\begin{center}
\includegraphics[width=0.45\textwidth]{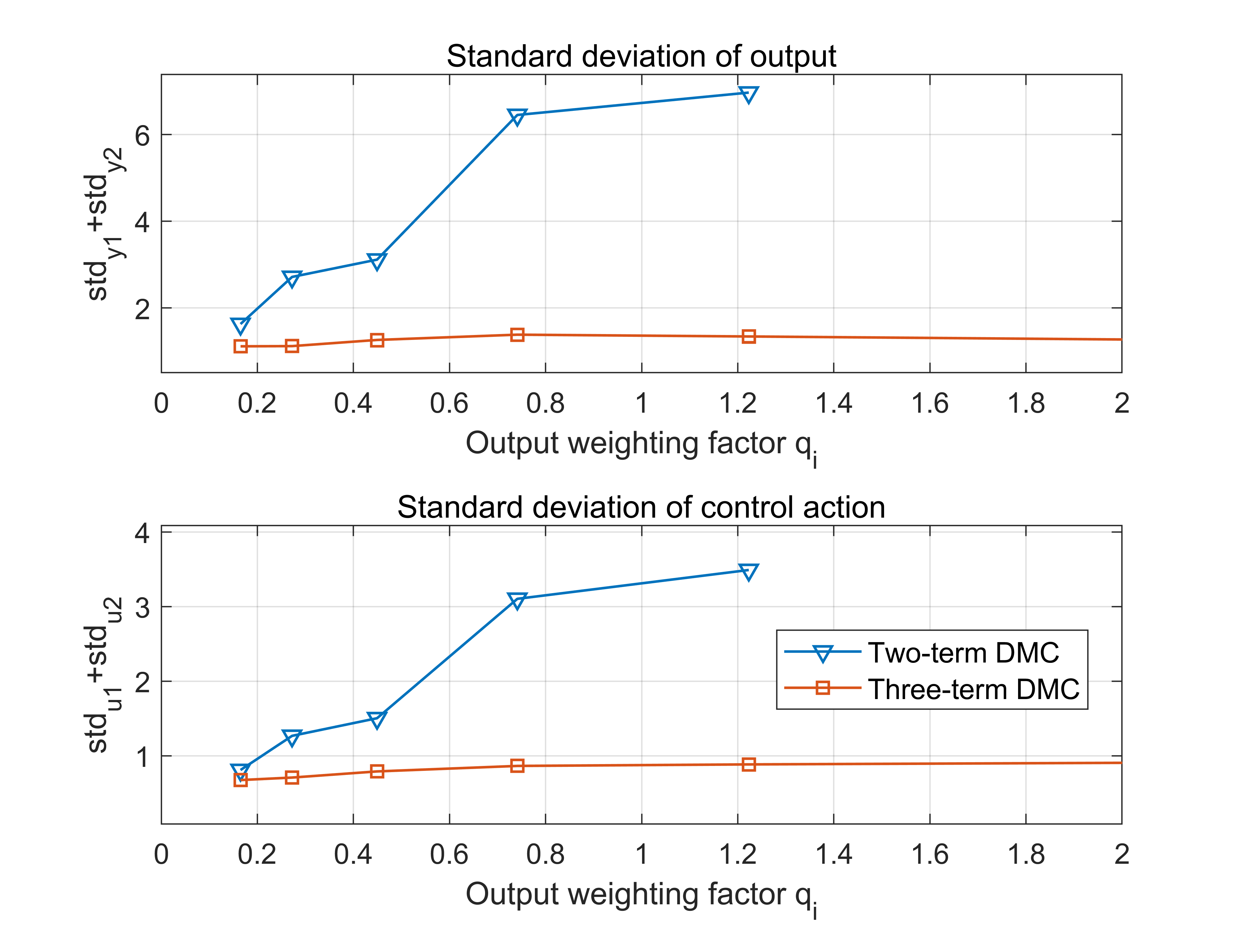}
\caption{Performance comparison of process C, process's gain = 2*Model's gain}
\label{fig: comparison2}
\end{center}
\end{figure}
\begin{figure}
\begin{center}
\includegraphics[width=0.45\textwidth]{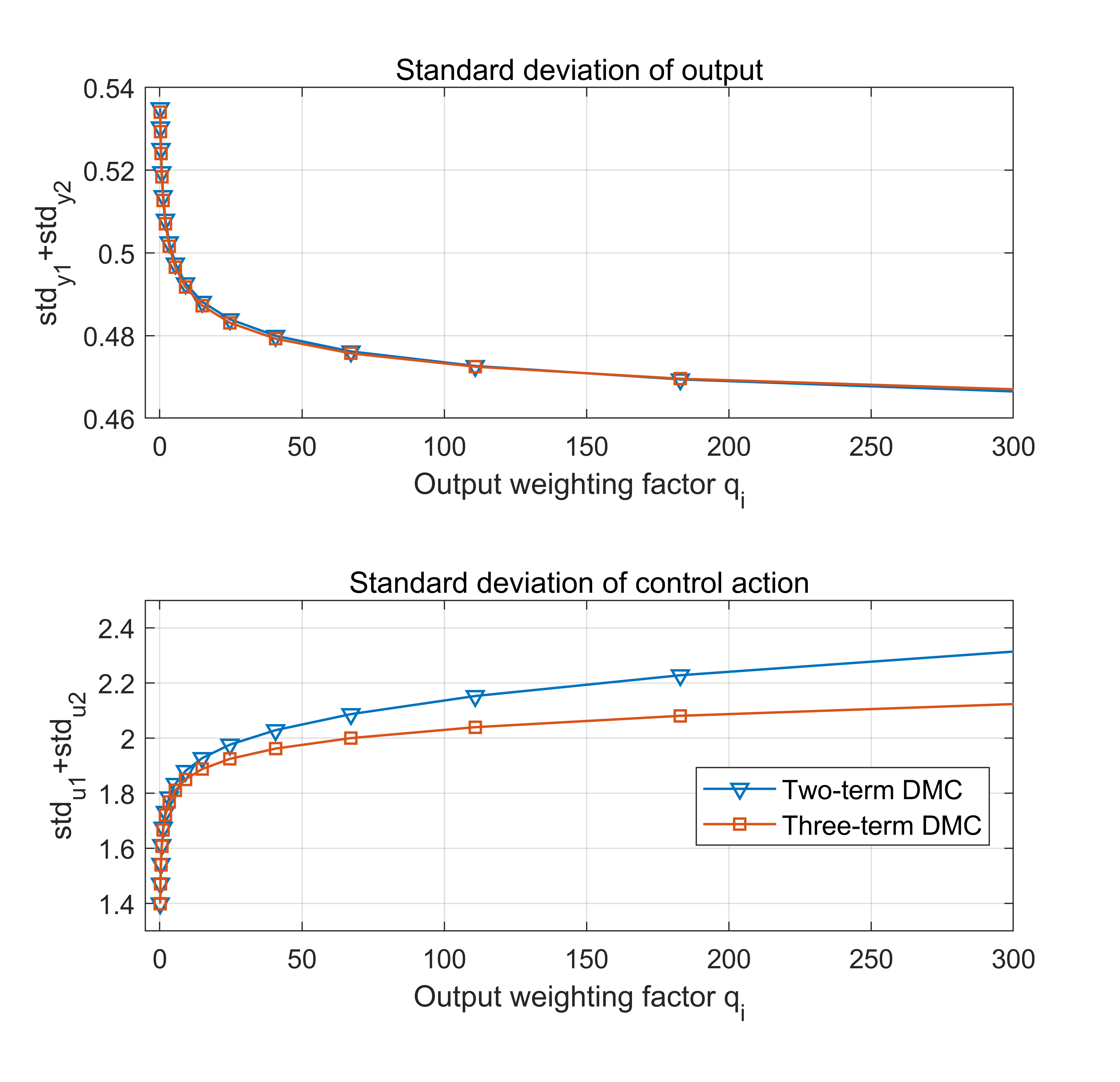}
\caption{Performance comparison of process C, process's gain = 0.5*Model's gain}
\label{fig: comparison3}
\end{center}
\end{figure}
\subsection{Simulation study: performance comparison with constraints}
MPC is widely acknowledged for its ability to handle constraints optimally. In practice, one of the common constraints are input saturation. Extensive researches have been devoted to the design of anti-windup schemes that mitigate the effects of control saturation (\citep{Zheng1998, Mulder2000, Adegbege2001}). A benchmark example which is commonly used in the study of anti-windup scheme (\citep{Zheng1998}) is used here. The process is shown in Fig. \ref{fig: process_morari}.
\begin{figure}
\begin{center}
\includegraphics[width=0.45\textwidth]{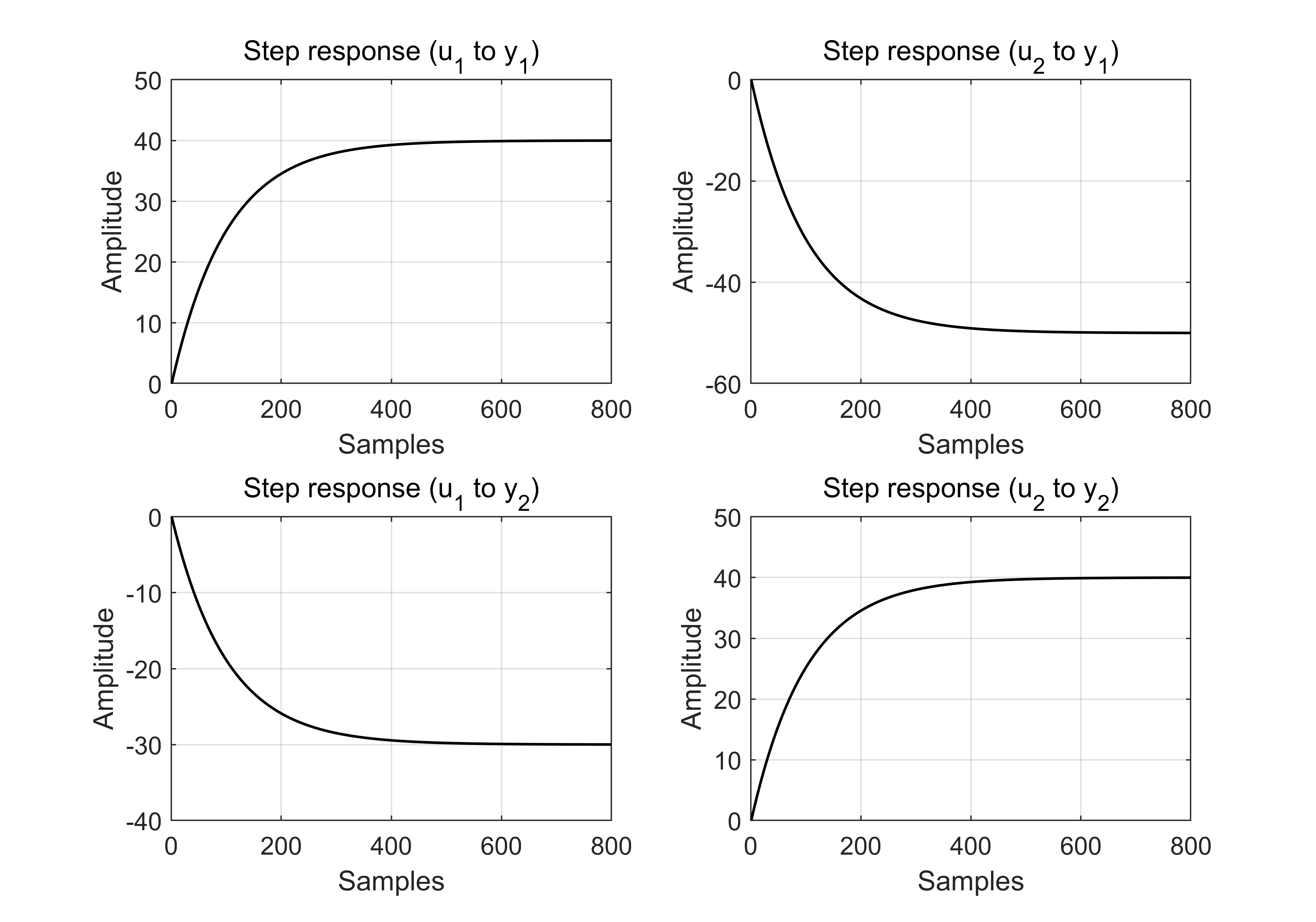}
\caption{Process D}
\label{fig: process_morari}
\end{center}
\end{figure}

In the simulations, two cases are compared as follows:

\textbf{Case 1: Two-term DMC}
\newline $r_j=10$, $q_i=1$, $\lambda_1=20$, $\lambda_2=20$; $ - 0.7 \le {u_j(k)} \le 0.7$;

\textbf{Case 2: Three-term DMC}
\newline $r_j=10$, $q_i=1$, $s_1=400q_1$, $s_2=400q_2$; $ - 0.7 \le {u_j(k)} \le 0.7$;

The simulation was conducted with the same horizon settings for both the two-term DMC and three-term DMC cases, namely horizon of dynamics of $N=500$, prediction horizon of $P=400$, and control horizon of $M=30$. In contrast to previous works in the literature \citep{Zheng1998, Mulder2000, Adegbege2001}, where the constraint $-1\le u_j(k) \le 1$ was applied, the constraint in this simulation was made more stringent to demonstrate the differences between the two controllers.

The simulation results indicate that both two-term and three-term DMC can successfully meet the stringent constraints. Notably, the response of the output from the three-term DMC is smoother compared to that of the two-term DMC. In case 2, the second output exhibits a slight reverse phenomenon. This difference in performance can be attributed to the non-diagonal form of the weighting matrix $\bm{Q}$ used in the three-term DMC.

\begin{figure}
\begin{center}
\includegraphics[width=0.45\textwidth]{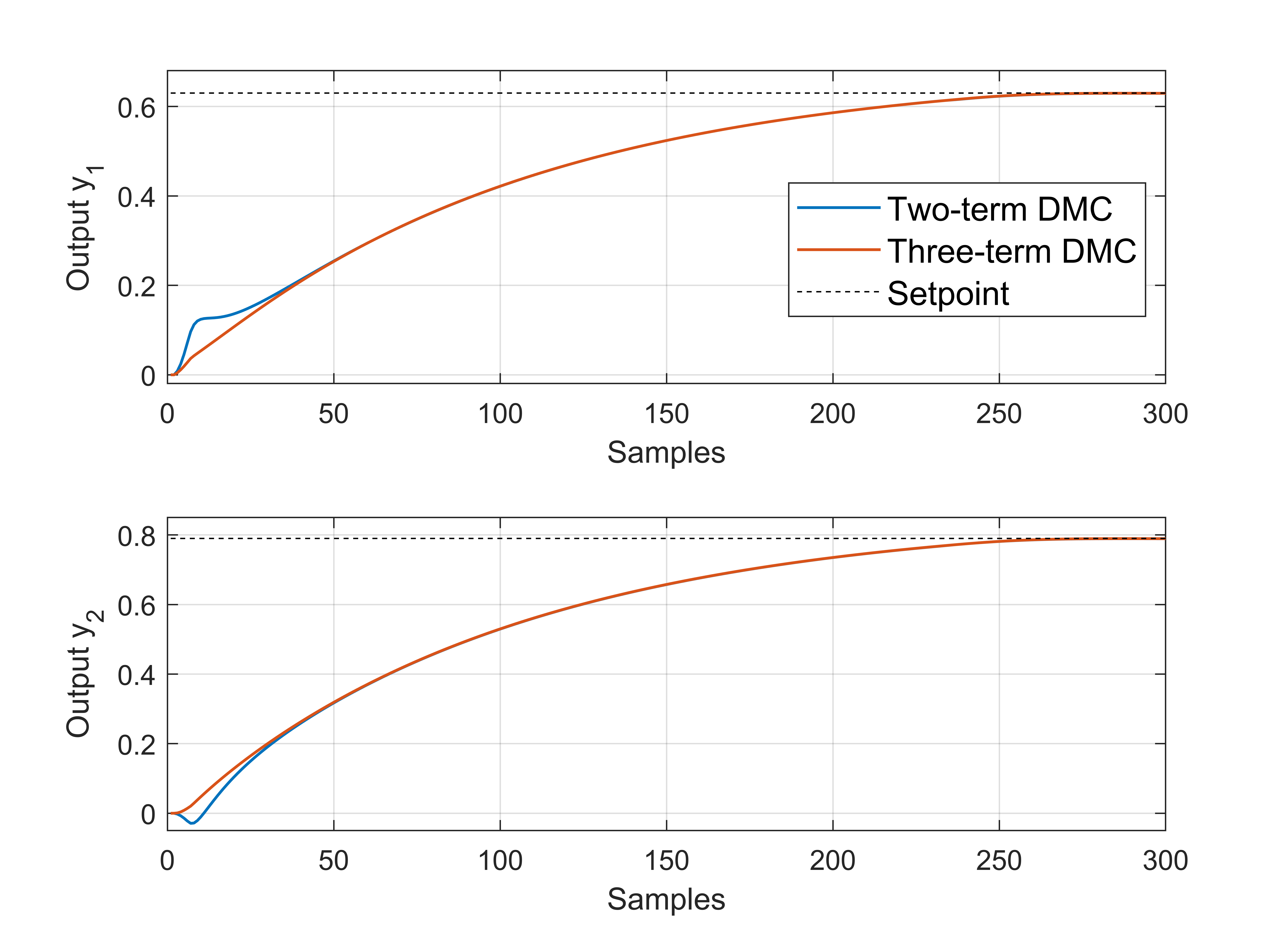}
\caption{Output comparison with constraint of process D}
\label{fig: limit_y}
\end{center}
\end{figure}
\begin{figure}
\begin{center}
\includegraphics[width=0.45\textwidth]{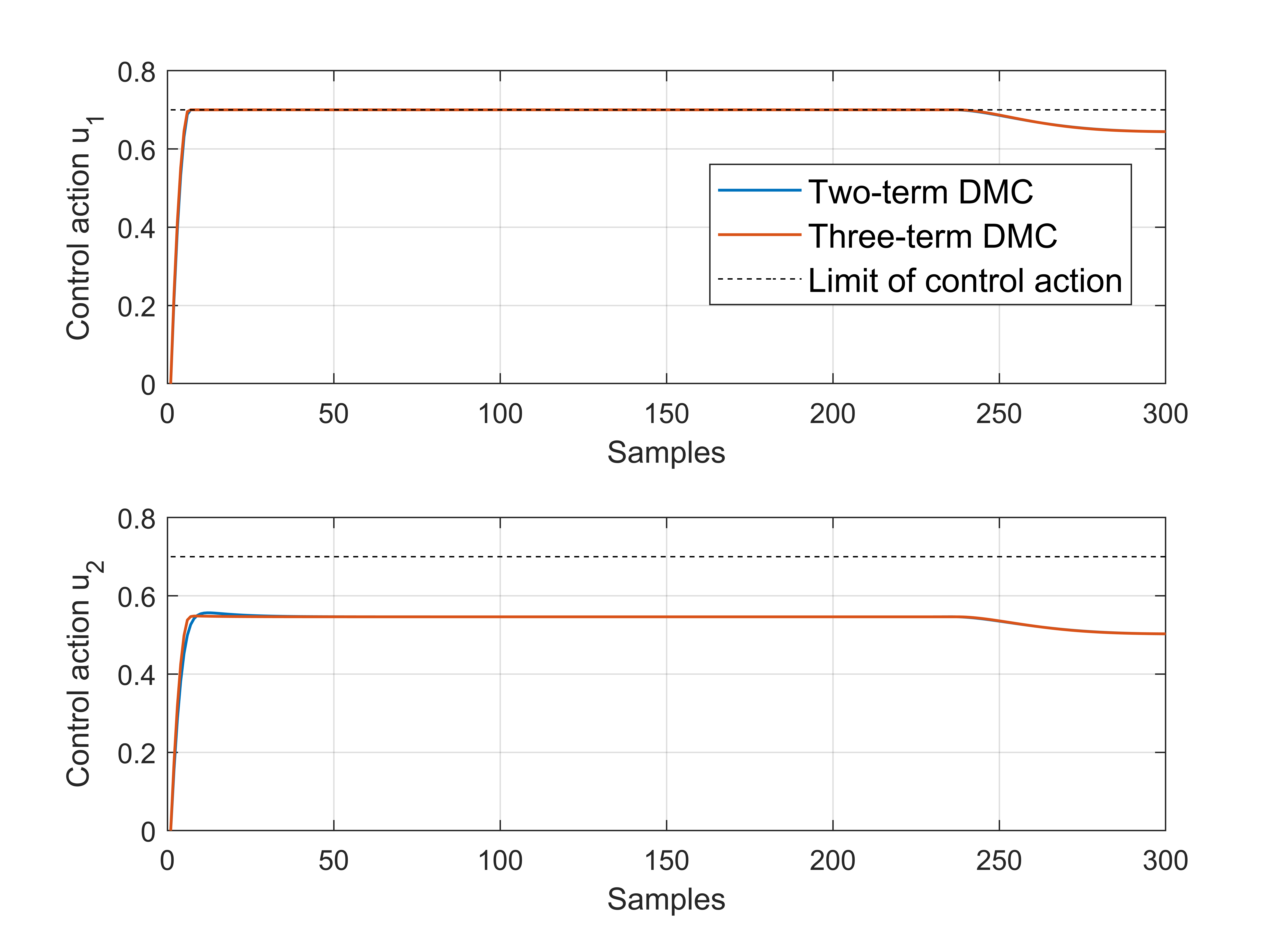}
\caption{Control action comparison with constraint of process D}
\label{fig: limit_u}
\end{center}
\end{figure}

\section{Conclusion}
This study investigates the three-term Dynamic Matrix Control (DMC) algorithm and establishes a relationship between the three-term DMC and the two-term DMC. Additionally, a formula for computing the ideal closed-loop response curves is derived. Based on the analysis, two controller tuning procedures are developed, one for closed-loop step response and one for disturbance reduction, which are simple, intuitive, and straightforward. Furthermore, a novel method for control method comparison is proposed, and using this approach, it is shown that the three-term DMC achieves higher performance and robustness than the traditional two-term DMC. The findings and tuning methods are supported by simulation results. This research is constructive and application-oriented. However, the stability of the three-term DMC remains an open question, and the authors welcome interested researchers to contribute further theoretical analysis of the three-term DMC. This work reveals that there is still room for improving controller performance and robustness, and one way to do so is to try new structures in control algorithms. Along this line of thinking, an interesting future research is to develop a three-term LGR(LQG) controller. Finally, an interesting finding in comparing two control methods is: "Control Method A is higher performing than Control Method B" is equivalent to "Control Method A is more robust than Control Method B". Note that when comparing two tunings of the same control method, Tuning A is higher performing than Tuning B often implies that Tuning A is less robust than Tuning B.

\bibliographystyle{elsarticle-num}
\bibliography{mybibliography}

\appendix

\section{Proof of Theorem \ref{thm1}}\label{proof_thm1}

The equivalent reference curve can be expressed as
\begin{equation}\label{.1}
{\bm{w}}_i^{'}(k) = {\left( {{\bm{I}} + {{\bm{T}}_4}{{\bm{T}}_2}} \right)^{ - 1}} \cdot \left( {{{\bm{r}}_i}(k) + {{\bm{T}}_4}{{\bm{T}}_3}{y_i}(k)} \right)
\end{equation}
where ${{\bm{r}}_i}(k)$ denotes the ${i^{th}}$ setpoint. ${y_i}(k)$ denotes the ${i^{th}}$ output value at last sample. ${q_i}$ and ${s_i}$ denote the output weighting coefficient and the output increment weighting coefficient regarding the ${i^{th}}$ output.

Through simple transforms, one can get
\begin{equation}\label{.2}
\left( {{\bm{I}} + {{\bm{T}}_4}{{\bm{T}}_2}} \right){\bm{w}}_i^{'}(k) = {\bm{r}_i}(k) + {{\bm{T}}_4}{{\bm{T}}_3}{y_i}(k)
\end{equation}
Notice that
\begin{equation}\label{.3}
{{\bm{T}}_2}{\bm{w}}_i^{'}(k) = \Delta {\bm{w}}_i^{'}(k) + {{\bm{T}}_3}w_i^{'}(k)
\end{equation}
where
\begin{equation}\label{.4}
\Delta {\bm{w}}_i^{'}(k) = {\left[ {w_i^{'}(k+1) - w_i^{'}(k),...,w_i^{'}(k + P) - w_i^{'}(k+P-1)} \right]^T}
\end{equation}
Eq. (\ref{.2}) can be rewritten as
\begin{equation}\label{.5}
{\bm{w}}_i^{'}(k) + {{\bm{T}}_4}\Delta {\bm{w}}_i^{'}(k) + {{\bm{T}}_4}{{\bm{T}}_3}w_i^{'}(k) = {{\bm{r}}_i}(k) + {{\bm{T}}_4}{{\bm{T}}_3}{y_i}(k)
\end{equation}
$w_i^{'}(k)$ is the start point of the reference curve ${\bm{w}}_i^{'}(k)$ and should be equal to $y_i (k)$, then the Eq. (\ref{.5}) can be simplified as
\begin{equation}\label{.6}
{\bm{w}}_i^{'}(k) + {{\bm{T}}_4}\Delta {\bm{w}}_i^{'}(k) = {{\bm{r}}_i}(k)
\end{equation}
Substitute ${{\bm{T}}_4} = {{\bm{Q}}^{ - 1}}{\bm{T}}_2^T{\bm{S}}$ into Eq. (\ref{.6}),
\begin{equation}\label{.7}
{\bm{w}}_i^{'}(k) + \frac{{{s_i}}}{{{q_i}}}{\bm{T}}_2^T\Delta {\bm{w}}_i^{'}(k) = {{\bm{r}}_i}(k)
\end{equation}
\begin{equation}\label{.8}
\frac{{{s_i}}}{{{q_i}}}\Delta {\bm{w}}_i^{'}(k) = {\bm{T}}_2^{ - T}\left( {{{\bm{r}}_i}(k) - {\bm{w}}_i^{'}(k)} \right)
\end{equation}
where
\begin{equation}\label{.9}
{\bm{T}}_2^{ - T} = {\left[ {\begin{array}{*{20}{c}}
  1&1& \cdots &1&1 \\
  {}&1& \cdots &1&1 \\
  {}&{}& \ddots & \vdots & \vdots  \\
  {}&{}&{}&1&1 \\
  {}&{}&{}&{}&1
\end{array}} \right]_{P \times P}}
\end{equation}
Regarding the vector ${\bm{w}}_i^{'}(k)$ as a function of time $k$, Eq. (\ref{.8}) can be expressed as
\begin{equation}\label{.10}
\frac{{{s_i}}}{{{q_i}}}\Delta w_i^{'}(k + h) = \sum\limits_{h = P}^0 {\left( {{r_i}(k) - w_i^{'}(k + h)} \right)}
\end{equation}
where $h = 1,...,P$. Assume the prediction horizon $P \to \infty $, and the sampling time ${T_s} \to 0$. One can get a continuous-time equation,
\begin{equation}\label{.11}
\frac{{{s_i}}}{{{q_i}}}\dot w_i^{'}(k + h) = \int\limits_\infty ^0 {\left( {{r_i}(k) - w_i^{'}(k + h)} \right)} dk
\end{equation}
Differentiate on both sides of the Eq. (\ref{.11}), one can get
\begin{equation}\label{.12}
\frac{{{s_i}}}{{{q_i}}}\ddot w_i^{'}(k + h) = w_i^{'}(k + h) - {r_i}(k)
\end{equation}
\section{Proof of Corollary \ref{cor1} and \ref{cor2}}\label{proof_cor12}
\subsection{Derivation of constant sequence}
Consider the reference signal is constant sequence, then the general solution of Eq. (\ref{.12}) is
\begin{equation}\label{.13}
w_i^{'}(k + h) = {C_1}{e^{h\sqrt {{q_i}/{s_i}} }} + {C_2}{e^{ - h\sqrt {{q_i}/{s_i}} }}
\end{equation}
and $w_i^{'}(k + h) = {r_i}(k)$ is a particular solution. So that the total solution is
\begin{equation}\label{.14}
w_i^{'}(k + h) = {r_i}(k) + {C_1}{e^{h\sqrt {{q_i}/{s_i}} }} + {C_2}{e^{ - h\sqrt {{q_i}/{s_i}} }}
\end{equation}
Under the assumption $w_i^{'}(k) = {y_i}(k)$, we have condition:
\begin{equation}\label{.15}
\left\{ \begin{gathered}
  w_i^{'}(k) = {y_i}(k) \hfill \\
  w_i^{'}(\infty ) = {r_i}(k) \hfill \\
\end{gathered}  \right.
\end{equation}
It can be easily obtained that ${C_1} = 0,{C_2} = {y_i}(k) - {r_i}(k)$.
To sum up,
\begin{equation}\label{.16}
w_i^{'}(k + h) = {r_i}(k) + \left( {{y_i}(k) - {r_i}(k)} \right){e^{ - h\sqrt {{q_i}/{s_i}} }}
\end{equation}

\begin{equation}\label{.17}
w_i^{'}(k + h) = {y_i}(k) + \left( {{r_i}(k) - {y_i}(k)} \right)\left( {1 - {e^{\frac{{ - h}}{{\sqrt {{s_i}/{q_i}} }}}}} \right)
\end{equation}
Eq. (\ref{.17}) is exactly a first order response curve. One can obtain:

\begin{equation}\label{.18}
{\lambda _i} = \sqrt {{s_i}/{q_i}}
\end{equation}

This equivalence relation shows that ${\bm{w}}_i^{'}(k)$ plays the role of reference trajectory, specifically a first order reference trajectory which is widely adopted. This property is very useful for controller tuning. For multi-variable system, ${s_i}$ can be designed independently for each output. Engineers only need to enter the expected closed-loop response time of every output, then the output increment weighting matrix $\bm{S}$ can be easily calculated.
\subsection{Derivation of ramp sequence}
The proof follows Eq. (\ref{.1}$\sim$\ref{.12}) in the proof of Theorem \ref{thm1}. Consider the reference signal is ramp sequence, one can also obtain a general solution as follows,
\begin{equation}
w_i^{'}(k + h) = {C_1}{e^{h\sqrt {{q_i}/{s_i}} }} + {C_2}{e^{ - h\sqrt {{q_i}/{s_i}} }}
\end{equation}
having the assumption that ${{{w}}_i}(k + h)$ is a ramp function, we know that ${{\ddot w}_i}(k + h) = 0$, then $w_i^{'}(k + h) = {w_i}(k + h)$ is a particular solution. So that the total solution is
\begin{equation}
w_i^{'}(k + h) = {w_i}(k + h) + {C_1}{e^{h\sqrt {{q_i}/{s_i}} }} + {C_2}{e^{ - h\sqrt {{q_i}/{s_i}} }}
\end{equation}
we have the solution condition:
\begin{equation}
\left\{ \begin{gathered}
  w_i^{'}(k) = {y_i}(k) \hfill \\
  w_i^{'}(k + P) = {w_i}(k + P) \hfill \\
\end{gathered}  \right.
\end{equation}
assuming $P$ is large enough, then we have ${C_1} = 0,{C_2} = {y_i}(k) - {w_i}(k)$.
To sum up,
\begin{equation}
w_i^{'}(k + h) = {w_i}(k + h) + \left( {{y_i}(k) - {w_i}(k)} \right){e^{ - h\sqrt {{q_i}/{s_i}} }}
\end{equation}

\section{Proof of Theorem \ref{thm2}}\label{proof_thm2}
\subsection{Derivation when $d_i = 1$}
In order to simplify the derivation, the single variable system will be discussed first; then, the result will be extended to the multi-variable system.
Consider a single variable process with one sample delays. In the three-term DMC scheme, weighting matrix $\bm{Q}$ and $\bm{S}$ has the structure as follows:
\begin{equation}
{\bf{Q}} = {\left[ {\begin{array}{*{20}{c}}
  q&{}&{} \\
  {}& \ddots &{} \\
  {}&{}&q
\end{array}} \right]_{P \times P}},{\bf{S}} = {\left[ {\begin{array}{*{20}{c}}
  s&{}&{} \\
  {}& \ddots &{} \\
  {}&{}&s
\end{array}} \right]_{P \times P}}
\end{equation}
The process has 1 sample delays, but in the diagonal of $\bm{Q}$ and $\bm{S}$, there is no zero because the model predicts the output from $k+1$ to $k+P$ at time $k$. The equivalent weighting matrix $\bm{Q}^{'}$ is as follows:
\begin{equation}
\begin{gathered}
  {{\bf{Q}}^{'}} = {\bf{Q}}\left( {{\bf{I}} + {{\bf{T}}_4}{{\bf{T}}_2}} \right) \hfill \\
   = {\left[ {\begin{array}{*{20}{c}}
  {q + 2s}&{ - s}&{}&{}&{} \\
  { - s}&{q + 2s}&{}&{}&{} \\
  {}&{}& \ddots &{}&{} \\
  {}&{}&{}&{q + 2s}&{ - s} \\
  {}&{}&{}&{ - s}&{q + s}
\end{array}} \right]_{P \times P}} \hfill \\
\end{gathered}
\end{equation}
Ignore the weighted norm of input increment ($\bm{R} = \bm{0}$), and rewrite the lost function Eq. (\ref{3to2})
\begin{equation}\label{J0_d=1}
\mathop {\min }\limits_{\Delta {\bf{U}}(k)} {J_{3term}}(k) = \left\| {\Delta {{\bf{e}}_P}(k)} \right\|_{{{\bf{Q}}^{'}}}^2
\end{equation}
where
\begin{equation}
{{\bf{e}}_P}(k) = {\bf{w^{'}}}(k) - {{\bf{y}}_P}(k) = {\bf{w}^{'}}(k) - \left( {{{\bf{y}}_{P0}}(k) + {\bf{A}}\Delta {\bf{u}}} \right)
\end{equation}
${{\bf{e}}_P}(k)$ is the predictive error from sample time $k+1$ to $k+P$,
\begin{equation}
{{\bf{e}}_P}(k) = \left[ {\begin{array}{*{20}{c}}
{e(k + 1)}&{e(k + 2)}& \ldots &{e(k + P)}
\end{array}} \right]
\end{equation}
${\bf{w}^{'}}(k)$ is the vector of the future set point,
\begin{equation}
{\bf{w}^{'}}(k) = {\left[ {\begin{array}{*{20}{c}}
{w^{'}(k + 1)}& \cdots &{w^{'}(k + P)}
\end{array}} \right]^T}
\end{equation}
To minimize lost function Eq. (\ref{J0_d=1}), taking the derivative of ${J_{3term}}(k)$ with respect to $\Delta {\bf{u}}(k)$, one obtains
\begin{equation}\label{differential}
\frac{{d{J_{3term}}(k)}}{{d\Delta {\bf{u}}(k)}} =  - 2{{\bf{A}}^T}{{\bf{Q}}^{'}}{{\bf{e}}_P}(k)
\end{equation}
the solution ${\Delta {\bf{u}}(k)}$ makes
\begin{equation}\label{solution_e1}
{\bf{e}}_P^*(k) = \left[ {\begin{array}{*{20}{c}}
  0&0& \ldots &0
\end{array}} \right]_{P \times 1}^T
\end{equation}
Eq. (\ref{solution_e1}) can be easily achieved theoretically when applying a one-step prediction strategy where $P=M$, resulting in a square and reversible dynamics matrix. Even when $P>M$, it is still acceptable due to the acceptable level of error tolerance.
then we have
\begin{equation}
y(k + h) = {w^{'}}(k + h)
\end{equation}
Similarly, in the multi-variable system, it's easy to obtain the optimal solution,
\begin{equation}
{\bf{e}}_P^*(k) = \left[ {\begin{array}{*{20}{c}}
  0&0& \ldots &0
\end{array}} \right]_{pP \times 1}^T
\end{equation}
accordingly, for each output,
\begin{equation}
y_i(k + h) = {w_i^{'}}(k + h)
\end{equation}
\subsection{Derivation when ${d_i} > 1$}
Consider a single variable process with two sample delays. In the three-term DMC scheme, weighting matrix $\bm{Q}$ and $\bm{S}$ has the structure as follows:
\begin{equation}
{\bf{Q}} = {\left[ {\begin{array}{*{20}{c}}
0&{}&{}&{}\\
{}&q&{}&{}\\
{}&{}& \ddots &{}\\
{}&{}&{}&q
\end{array}} \right]_{P \times P}},{\bf{S}} = {\left[ {\begin{array}{*{20}{c}}
0&{}&{}&{}\\
{}&s&{}&{}\\
{}&{}& \ddots &{}\\
{}&{}&{}&s
\end{array}} \right]_{P \times P}}
\end{equation}
The process has 2 sample delays, but in the diagonal of $\bm{Q}$ and $\bm{S}$, there is only one zero because the model predicts the output from $k+1$ to $k+P$ at time $k$. The equivalent weighting matrix $\bm{Q}^{'}$ is as follows:
\begin{equation}
\begin{array}{l}
{{\bf{Q}}^{'}} = {\bf{Q}}\left( {{\bf{I}} + {{\bf{T}}_4}{{\bf{T}}_2}} \right)\\
 = {\left[ {\begin{array}{*{20}{c}}
s&{ - s}&{}&{}&{}&{}\\
{ - s}&{q + 2s}&{ - s}&{}&{}&{}\\
{}&{ - s}&{q + 2s}&{}&{}&{}\\
{}&{}&{}& \ddots &{}&{}\\
{}&{}&{}&{}&{q + 2s}&{ - s}\\
{}&{}&{}&{}&{ - s}&{q + s}
\end{array}} \right]_{P \times P}}
\end{array}
\end{equation}
Ignore the weighted norm of input increment ($\bm{R} = \bm{0}$), and rewrite the lost function Eq. (\ref{3to2})
\begin{equation}\label{J0_d>1}
\begin{gathered}
\mathop {\min }\limits_{\Delta {\bf{U}}(k)} {J_{3term}}(k) = \left\| {\Delta {{\bf{e}}_P}(k)} \right\|_{{{\bf{Q}}^{'}}}^2
\end{gathered}
\end{equation}
where
\begin{equation}
{{\bf{e}}_P}(k) = {\bf{w^{'}}}(k) - {{\bf{y}}_P}(k) = {\bf{w}^{'}}(k) - \left( {{{\bf{y}}_{P0}}(k) + {\bf{A}}\Delta {\bf{u}}} \right)
\end{equation}
${{\bf{e}}_P}(k)$ is the predictive error from sample time $k+1$ to $k+P$,
\begin{equation}
{{\bf{e}}_P}(k) = {\left[ {\begin{array}{*{20}{c}}
  {e(k + 1)}&{e(k + 1)}& \ldots &{e(k + P)}
\end{array}} \right]^T}
\end{equation}
${\bf{w}^{'}}(k)$ is the vector of the future set point,
\begin{equation}
{\bf{w}^{'}}(k) = {\left[ {\begin{array}{*{20}{c}}
{w^{'}(k + 1)}& \cdots &{w^{'}(k + P)}
\end{array}} \right]^T}
\end{equation}

To minimize lost function Eq. (\ref{J0_d>1}), taking the derivative of ${J_{3term}}(k)$ with respect to $\Delta {\bf{u}}(k)$, one obtains
\begin{equation}\label{differential_d>1}
\frac{{d{J_{3term}}(k)}}{{d\Delta {\bf{u}}(k)}} =  - 2{{\bf{A}}^T}{{\bf{Q}}^{'}}{{\bf{e}}_P}(k)
\end{equation}
where
\begin{equation}
{\bf{A}} = {\left[ {\begin{array}{*{20}{c}}
0& \cdots &0\\
{a(1)}&{}&0\\
 \vdots & \ddots & \vdots \\
{a(M)}& \cdots &{a(1)}\\
 \vdots &{}& \vdots \\
{a(P - 1)}& \cdots &{a(P - M)}
\end{array}} \right]_{P \times M}}
\end{equation}
$a(k)$ is the step response coefficients, $a(1) \ne 0$. Suppose $\Delta {{\bf{u}}^0}(k)$ is the solution which makes
\begin{equation}\label{e0}
{\bf{e}}_P^0(k) = {\left[ {\begin{array}{*{20}{c}}{w^{'}(k + 1)}&0& \ldots &0\end{array}} \right]^T}
\end{equation}
There is a counterintuitive result:
$\Delta {{\bf{u}}^0}(k)$ is not the optimal solution of the proposition Eq. (\ref{J0_d>1}). Substitute
Eq. (\ref{e0}) into Eq. (\ref{differential}), one obtains
\begin{equation}
\begin{array}{l}
{\left. {\frac{{d{J_{3term}}(k)}}{{d\Delta {\bf{u}}(k)}}} \right|_{\Delta {\bf{u}}(k) = \Delta {{\bf{u}}^0}(k)}} =  - 2{{\bf{A}}^T}{{\bf{Q}}^{'}}{\bf{e}}_P^0(t)\\
 - 2\left[ {\begin{array}{*{20}{c}}
0&{a(1)}& \cdots &{a(M)}& \cdots &{a(P - 1)}\\
 \vdots &{}& \ddots &{}&{}& \vdots \\
0&0& \cdots &{a(1)}& \cdots &{a(P - M)}
\end{array}} \right]\left[ {\begin{array}{*{20}{c}}
{s{w^{'}}(t + 1)}\\
{ - s{w^{'}}(t + 1)}\\
0\\
 \vdots \\
0
\end{array}} \right]\\
 =  - 2\left[ {\begin{array}{*{20}{c}}
{ - a(1)s{w^{'}}(t + 1)}\\
0\\
0\\
 \vdots \\
0
\end{array}} \right]
\end{array}
\end{equation}
${\left. {\frac{{d{J_{3term}}(k)}}{{d\Delta {\bf{u}}(k)}}} \right|_{\Delta {\bf{u}}(k) = \Delta {{\bf{u}}^0}(k)}} \ne {\bf{0}}$, which means ${\Delta {\bf{u}}(k) = \Delta {{\bf{u}}^0}(k)}$ is not the optimal solution.

One should focus on ${{\bf{Q}}^{'}}{{\bf{e}}_P}(k)$. Suppose $\Delta {{\bf{u}}^*}(k)$ is the solution which makes a predictive error ${\bf{e}}_P^*(k)$ as follows
\begin{equation}\label{e_p^*}
{\bf{e}}_P^*(k) = {\left[ {\begin{array}{*{20}{c}}
{e(k + 1)}& \ldots &{e(k + P)}
\end{array}} \right]^T}
\end{equation}
If ${\bf{e}}_P^*(k)$ satisfies the following condition, one can obtain $\frac{{d{J_{3term}}(k)}}{{d\Delta {\bf{u}}(k)}} = {\bf{0}}$.
\begin{equation}\label{condition}
(q + 2s)e(k + h) = se(k + h - 1) + se(k + h + 1)
\end{equation}
where $h = 1, \ldots P$. Suppose ${\bf{e}}_P^*(k)$ satisfies Eq. (\ref{condition}), then we have
\begin{equation}\label{QE1}
{{\bf{Q}}^{'}}{\bf{e}}_P^*(k) = \left[ {\begin{array}{*{20}{c}}
{se(k + 1) - se(k + 2)}\\
0\\
0\\
 \vdots \\
{se(k + P + 1) - se(k + P)}
\end{array}} \right]
\end{equation}
Assumption \ref{ass1} ensures that $e(k + P) - e(k + P + 1) = 0$, then Eq. (\ref{QE1}) is simplified as
\begin{equation}\label{QE2}
{{\bf{Q}}^{'}}{\bf{e}}_P^*(k) = \left[ {\begin{array}{*{20}{c}}
{se(k + 1) - se(k + 2)}\\
0\\
0\\
 \vdots \\
0
\end{array}} \right]
\end{equation}
then one can check the derivative value,
\begin{equation}
{\left. {\frac{{d{J_{3term}}(k)}}{{d\Delta {\bf{u}}(k)}}} \right|_{\Delta {\bf{u}}(k) = \Delta {{\bf{u}}^*}(k)}} =  - 2{{\bf{A}}^T}{{\bf{Q}}^{'}}{\bf{e}}_P^*(t) = {\bf{0}}
\end{equation}
then the closed-loop response can be obtained if one solves Eq. (\ref{condition}).
Simplify Eq. (\ref{condition}) as follows
\begin{equation}\label{ch_equation1}
e(k + h - 1) - (\frac{q}{s} + 2)e(k + h) + e(k + h + 1) = 0
\end{equation}
Eq. (\ref{ch_equation1}) is a homogeneous differential equation with constant coefficients. The characteristic equation is
\begin{equation}\label{ch_equation2}
{\alpha ^2} - \left( {\frac{q}{s} + 2} \right)\alpha  + 1 = 0
\end{equation}
the form of the solution is
\begin{equation}\label{form}
e(k + h) = C{\alpha ^h}
\end{equation}
the solution of Eq. (\ref{ch_equation2}) is
${\alpha _1} = \frac{{\left( {\frac{q}{s} + 2} \right) - \sqrt {{{\left( {\frac{q}{s} + 2} \right)}^2} - 4} }}{2}$ and ${\alpha _2} = \frac{{\left( {\frac{q}{s} + 2} \right) + \sqrt {{{\left( {\frac{q}{s} + 2} \right)}^2} - 4} }}{2}$. $0 < {\alpha _1} < 1$ and ${\alpha _2} > 1$. Then we have
\begin{equation}
e(k + h) = {C_1}\alpha _1^h + {C_2}\alpha _2^h
\end{equation}
$e(k + h)$ should be convergent which means ${C_2} = 0$. Considering that the process has two sample delays, we have the condition:
\begin{equation}\label{solu_con1}
e(k + 1) = w(k + 1)
\end{equation}
It is easy to obtain ${C_1} = w^{'}(k + 1)\alpha _1^{ - 1}$. The expression of $e(k)$ is
\begin{equation}
e(k + h) = w^{'}(k + 1)\alpha _1^{h - 1}
\end{equation}
If the process's delay is larger than two samples, the only thing changed is the solution condition Eq. (\ref{solu_con1}). Denote the delay of the process as $d$ samples, $d>1$, then
\begin{equation}\label{solu_con2}
e(k + d - 1) = w^{'}(k + d - 1) = {C_1}\alpha _1^{d - 1}
\end{equation}
then one has ${C_1} = w^{'}(k + d - 1)\alpha _1^{ - d + 1}$. The expression of $e(k)$ can be formulated as
\begin{equation}\label{e_**}
e(k + h) = \left\{ {\begin{array}{*{20}{c}}
  {w^{'}(k + d - 1){\quad \quad \quad \quad \quad},h \leqslant d - 1} \\
  {w^{'}(k + d - 1)\alpha _1^{h - d + 1}{\quad \quad},h > d - 1}
\end{array}} \right.
\end{equation}
Having the expression of ${e}(k + h)$, the predictive output can be get through ${{\bf{y}}_P}(k) = {{\bf{w}}^{'}}(k) - {\bf{e}}_P^*(k)$. Namely,
\begin{equation}\label{y_cl}
{y_P}(k + h) = \left\{ {\begin{array}{*{20}{c}}
  {0{\quad \quad \quad \quad \quad \quad},h \leqslant d - 1} \\
  {w^{'}(k + h) - w^{'}(k + d - 1)\alpha _1^{h - d + 1},h > d - 1}
\end{array}} \right.
\end{equation}

The calculation gives a best trajectory of the predictive error ${{\bf{w^{'}}}(k) - {{\bf{y}}_P}(k)}$ which minimize the weighted term ${\left\| {{\bf{w}^{'}}(k) - {{\bf{y}}_P}(k)} \right\|_{{{\bf{Q}}^{'}}}}$. Therefore, every output can be considered independently. Then one obtains the following result.

The estimation of the $i^{th}$ output is
\begin{equation}\label{inprove_est_multi}
{y_i}(k + h) = \left\{ {\begin{array}{*{20}{c}}
{0{\quad \quad \quad \quad \quad \quad},h \le {d_i} - 1}\\
{{w_i^{'}}(k + h) - {w_i^{'}}(k + {d_i} - 1)\alpha _i^{h - d + 1}{\rm{   }},h > {d_i} - 1}
\end{array}} \right.
\end{equation}
where $h = 1, \ldots P$,
\begin{equation}
{\alpha_i} = \frac{{\left( {\frac{q_i}{s_i} + 2} \right) - \sqrt {{{\left( {\frac{q_i}{s_i} + 2} \right)}^2} - 4} }}{2}
\end{equation}


\end{document}